\crefname{section}{Sec.}{Sects.}
\newcommand{\ce}{\coloneqq}
\newcommand{\wt}[1]{\widetilde{#1}}
\newcommand{\N}{\mathbb{N}}
\newcommand{\Z}{\mathbb{Z}}
\newcommand{\R}{\mathbb{R}}
\newcommand{\MF}{\mathcal{F}}
\newcommand{\MH}{\mathcal{H}}
\newcommand{\MN}{\mathcal{N}}
\newcommand{\MS}{\mathcal{S}}
\newcommand{\MFn}{\mathfrak{n}}
\newcommand{\ii}{\mathrm{i}}
\newcommand{\ee}{\mathrm{e}}
\newcommand{\id}{\mathrm{Id}}
\newcommand{\vecx}{\mathbf{x}}
\newcommand{\vecX}{\mathbf{X}}
\newcommand{\vecq}{\mathbf{q}}
\newcommand{\set}[2][]{#1\{{#2}#1\}}
\newcommand{\abs}[2][]{#1\vert{#2}#1\vert}
\newcommand{\norm}[2][]{#1\Vert{#2}#1\Vert}
\newcommand{\braket}[2][]{#1\langle{#2}#1\rangle}
\DeclareMathOperator{\dom}{dom}
\newcommand{\FUBaffiliation}{\affiliation{Freie Universität Berlin, Institute of Mathematics, Arnimallee 6, 14195 Berlin, Germany}}
\theoremstyle{plain}
\newtheorem{lemma}{Lemma}[section]
\newtheorem{corollary}[lemma]{Corollary}
\theoremstyle{remark}
\newtheorem{remark}[lemma]{Remark}
\begin{document}

\title{Chemical potential and variable number of particles control the quantum state: Quantum oscillators as a showcase}

\author{Benedikt M. \surname{Reible}}
\email{benedikt.reible@fu-berlin.de}

\author{Ana Djurdjevac}
\email{adjurdjevac@zedat.fu-berlin.de}

\author{Luigi \surname{Delle Site}}
\email{luigi.dellesite@fu-berlin.de}

\FUBaffiliation

\begin{abstract}
    Despite their simplicity, quantum harmonic oscillators are ubiquitous in the modeling of physical systems. They are able to capture universal properties that serve as reference for the more complex systems found in nature. In this spirit, we apply a model of a Hamiltonian for open quantum systems in equilibrium with a particle reservoir to ensembles of quantum oscillators. By treating (i) a dilute gas of vibrating particles and (ii) a chain of coupled oscillators as showcases, we demonstrate that the property of varying number of particles leads to a mandatory condition on the energy of the system. In particular, the chemical potential plays the role of a parameter of control that can externally manipulate the spectrum of a system and the corresponding accessible quantum states.
\end{abstract}

\keywords{open system; ideal quantum gas; quantum harmonic oscillators}

\maketitle

\section{introduction}

Ensembles of quantum harmonic oscillators and the ideal quantum gases are paradigmatic cases on which one can test the physical consistency and the utility of specific models for many-body systems. In this work, we will concretely investigate the model of an open quantum system which exchanges particles with a reservoir, that is, a system with a variable number of particles. This model has recently been discussed by some of us in Ref.~\cite{ana}, where an effective Hamiltonian for such systems was formally derived from the von Neumann equation governing the evolution of a density matrix.

The interest in quantum harmonic oscillators has its roots not only in the paradigmatic nature of the model, but also stems from the wide range of possible applications to various physical systems, which allow for the investigation of fundamental properties through manageable calculations. An ensemble of harmonic oscillators is essentially a universal and simple many-body system. For example, it is employed to detect general properties associated with heat flow, thus serving as a starting point for the construction of more complex models. Furthermore, in multiscale approaches, it is a precious tool for building hybrid continuum-particle models \cite{TTT16,TTT17,alex}, and it is also a very effective model for low-dimensional materials \cite{TTT18} and modern quantum technological devices \cite{TTT19}. At the same time, at a conceptual level, further progress is needed in the treatment of open quantum many-body systems with varying numbers of particles, as a deeper theoretical understanding of these systems is becoming increasingly important. The use of second quantization approaches with creation and annihilation of particles has been proposed as a further advancement in the design of quantum computers since there is a direct identification of the fermion occupation number basis and quantum memory of qubits \cite{secquantpap}. In general, a variable total particle number is a key problem in current research and in modern technological development; see, e.g., \cite{advl} and references therein. A relevant example is the use of such systems in addressing the challenge of controlling quantum coherence at the level of individual particles \cite{advphys}. As will be outlined below, in this paper we explore the consequences of a varying number of particles on systems of quantum harmonic oscillators, and we show that using the model of an open system leads to non-trivial physical consequences. In particular, we find that the accessible quantum states for bosons and states of interest for fermions depend on the chemical potential. In light of these findings and the technological perspective mentioned before, one may imagine to manipulate the chemical potential and the number of particles externally with the aim of obtaining a spectrum on demand, and/or to manipulate the environment externally to drive the system into targeted quantum states \cite{advphys}. Indeed, in modern quantum technology, ideas based on the control of the chemical potential are becoming increasingly popular \cite{science,nature}, and thus our contribution provides a basic formal rationale to otherwise intuitive technical procedures. Regarding the specific utility of the quantum harmonic oscillator in the context of systems with varying number of particles, this can be considered as a basic model with practical consequences for quantum technology. In fact, this prototype is the simplest and most efficient representation of a phononic system which can be directly employed in applied research; for example, fluctuations and noisy excitations of a phononic environment/reservoir can induce decoherence through changes in the qubit state with controlled consequences on stored quantum information. Furthermore, coherent phonons that propagate in a material carry quantum information, and they can lead to the distribution of quantum information on a chip \cite{decodeph, traveling}. The most illuminating example of utility, however, is the rather recent experimental design of quantum oscillators at room temperature, a physical result obtained with the aim of having an experimental prototype for all the basic applications suggested here \cite{expharm}. Further examples of direct and indirect utility of our model will be reported in the discussions in the following sections.

The paper is structured as follows. In \cref{sec:effectiveHamiltonian}, we will outline the basic ingredients of the model of an open quantum system that exchanges particles with a reservoir. Following this, we will review some aspects of the mathematical formalism necessary to treat ensembles of harmonic oscillators in \cref{sec:quantumOscillator}. Our results are contained in \cref{sec:energyConditions}, where we apply the model of \cref{sec:effectiveHamiltonian} to various systems of harmonic oscillators: first, we will consider a generic ensemble of independent oscillators, thus treating only vibrational properties, and we will look at the effect of the varying number of oscillators in the ensemble by deriving a necessary condition on the energy spectrum. Next, we will extend the treatment to a gas of independent quantum particles moving in space which are also characterized by internal vibrations; in essence, we treat the ensemble of oscillators of the first example and allow their centers of mass to have translational degrees of freedom. For example, one may think of a dilute gas of diatomic molecules or a basic model for fermion-phonon coupling, like a gas of electrons coupled to a phononic thermal bath \cite{phon-el} or spin systems, with interactions mediated by excitations of phonons, e.g., in a crystal \cite{mediated}. We will again demonstrate the effect of a variable number of particles on the energy of the system and derive a corresponding necessary condition for bosons that extends the condition of the open ideal Bose gas well-known from textbooks of statistical mechanics (see, e.g., \cite{huang} and \cite{mazenko}, and the remarks in \cref{app:idealBoseGas}). The interpretation of the results for the case of fermions requires further explanation, and a separate section is dedicated to a qualitative discussion of the meaning of such a condition for fermions. Finally, we will consider a one-dimensional chain of coupled harmonic oscillators, note its equivalence to the first example of independent oscillators, and thus extend our model also to this case by deriving once again a necessary condition on the energy spectrum of the system.

\section{Effective Hamiltonian for an open quantum system}\label{sec:effectiveHamiltonian}

In a previous paper of some of us \cite{ana}, an effective Hamiltonian for an open quantum system in contact with a particle reservoir was derived from first principles and found to be
\begin{equation}\label{eq:effectiveHamiltonian}
  H_n^\mathrm{eff} = H_n - \mu \MN \ ,
\end{equation}
with $H_n$ the Hamiltonian of $n \in \N$ instantaneous particles (that is, for a snapshot of the open system in which the particle number happens to be $n$). Furthermore, $\mu \in \R$ denotes the chemical potential, and $\MN$ is the number operator which counts the number of particles of the system (see \cref{sec:quantumOscillator}). Equation \eqref{eq:effectiveHamiltonian} was derived by tracing out the degrees of freedom of the environment in the von Neumann equation of a large system. The derivation is done for the situation of equilibrium or near-equilibrium, and the obtained Hamiltonian naturally leads to the expected solution of the von Neumann equation, namely the grand canonical density matrix of a system in equilibrium with its reservoir. In Ref. \cite{ana}, the full consistency of the effective Hamiltonian $H_n^\mathrm{eff}$ with models used in molecular simulation, (non-relativistic) quantum field theory, and the theory of superconductivity was also discussed.

Most importantly, for the considerations in the present paper, it must be noted that an effective Hamiltonian such as \eqref{eq:effectiveHamiltonian} leads to an effective energy spectrum which characterizes the system in light of its property of having a variable number of particles \cite{bogo,librobogo}. Indeed, using the operator $H_n^\mathrm{eff}$ instead of $H_n$ does not simply entail a trivial shift of the spectrum by a constant $\mu$, but rather it implies, above all, that the corresponding ground state requires a minimization with respect to the particle number $n$ as well, due to the statistical exchange of particles with a reservoir. In this sense, as the results of this paper will indicate, the operator $H_n^\mathrm{eff}$ is conceptually more fundamental than other quantities such as the particle number distribution which can be also used in this context. In fact, the latter can either be used as a cross-check of the results (in the case of bosons; see \cref{rem:conditionQ}) or as a complementary tool for the physical interpretation of the results obtained through the analysis of $H_n^\mathrm{eff}$ (in the case of fermions). In particular, for the case of coupled oscillators, the effective Hamiltonian approach is straightforward for the reduction of the model to independent oscillators, a result that would not be easy, if not impossible, to reach if one employs, e.g., the particle distribution.

\section{Quantum harmonic oscillators}\label{sec:quantumOscillator}

In the following, we will introduce the mathematical formalism which will be used in \cref{sec:energyConditions} below. First, we shall quickly review the algebraic diagonalization procedure for a single, three-dimensional harmonic oscillator. Then, we will discuss the mathematical setup for a system of $N \in \N$ independent, identical oscillators. Finally, we will introduce the formalism of second quantization for such systems.

\subsection{Single harmonic oscillator}\label{subsec:oneOscillator}

Consider a single harmonic oscillator of mass $m > 0$ and frequency $\omega > 0$ with dynamical variable $\vecx = (x^{(1)}, x^{(2)}, x^{(3)}) \in \R^3$. (For the physics of the quantum harmonic oscillator see, e.g., \cite{saku}; for mathematically rigorous treatments consult, for example, \cite{GustafsonSigal2020, Moretti2018, Teschl2014, Zeidler1995}.) The Hilbert space of this system is $\MH^\mathrm{osc} = L^2(\R^3)$, and the Hamiltonian takes the form \footnote{A suitable choice of domain for $H^\mathrm{osc}$ is $\dom(H^\mathrm{osc}) = \MS(\R^3)$, the space of Schwartz functions, whereby $H^\mathrm{osc}$ becomes essentially self-adjoint \cite{Moretti2018, Teschl2014, Zeidler1995}. Taking the closure of this operator yields the self-adjoint extension $\overline{H^\mathrm{osc}}$. The interested reader can find an explicit mathematical description of this operator in \cite{Moretti2018} and \cite{Zeidler1995}.\label{ftn:domain}}
\begin{equation}\label{eq:singleOscHamiltonian}
    H^\mathrm{osc} = \frac{p^2}{2m} + \frac{1}{2} \, m \omega^2 \abs{\vecx}^2 = - \frac{\hbar^2}{2m} \, \Delta_{\vecx} + \frac{1}{2} \, m \omega^2 \abs{\vecx}^2 \ .
\end{equation}
Note that the three components of the dynamical variable $\vecx$ are independent, hence one may factorize the Hilbert space as $\MH^\mathrm{osc} = L^2(\R) \otimes L^2(\R) \otimes L^2(\R)$, and the Hamiltonian \eqref{eq:singleOscHamiltonian} according to
\begin{align}\label{eq:decomposition1dOperators}
    \begin{split}
        H^\mathrm{osc} &= \sum_{\alpha=1}^{3} \left[- \frac{\hbar^2}{2m} \, \frac{\partial^2}{\partial (x^{(\alpha)})^2} + \frac{1}{2} \, m \omega^2 \bigl(x^{(\alpha)}\bigr)^2\right] \\
        &\equiv \sum_{\alpha=1}^{3} H_\alpha^\mathrm{osc} \ .
    \end{split}
\end{align}
We emphasize that $\sum_{\alpha=1}^{3} H_\alpha^\mathrm{osc}$ is a short-hand notation for the more precise
\begin{gather*}
    H_1^\mathrm{osc} \otimes \id_{L^2(\R)} \otimes \id_{L^2(\R)} + \id_{L^2(\R)} \otimes H_2^\mathrm{osc} \otimes \id_{L^2(\R)} \\
    + \id_{L^2(\R)} \otimes \id_{L^2(\R)} \otimes H_3^\mathrm{osc} \ ,
\end{gather*}
and that this operator is self-adjoint, given that the one-dimensional operators $H_\alpha^\mathrm{osc}$ are self-adjoint \cite{Schmüdgen2012, RS1} (see also Ref.~\cite{Note1}).

As is well-known, the one-dimensional Hamiltonian $H_\alpha^\mathrm{osc}$, $\alpha \in \set{1, 2, 3}$, can be diagonalized by introducing creation and annihilation operators \cite{saku, Moretti2018, GustafsonSigal2020}: define the symbols \footnote{A suitable domain for the creation and annihilation operators is again the space $\MS(\R)$ of Schwartz functions.}
\begin{align*}
     a_\alpha^\dagger &\ce \sqrt{\frac{m \omega}{2 \hbar}} \left(x^{(\alpha)} - \frac{\hbar}{m \omega} \, \frac{\partial}{\partial x^{(\alpha)}}\right) \ , \\
     a_\alpha &\ce \sqrt{\frac{m \omega}{2 \hbar}}\left(x^{(\alpha)} + \frac{\hbar}{m \omega} \, \frac{\partial}{\partial x^{(\alpha)}}\right) \ .
\end{align*}
Then it holds that $[a_\alpha, a_\beta^\dagger] = \delta_{\alpha \beta} \, \id$, and furthermore, there exists an orthonormal basis $(\psi_{q_\alpha})_{q_\alpha \in \N_0}$ of $L^2(\R)$, namely, the Hermite functions \footnote{The element $\psi_0 \in L^2(\R)$ is given for all $x \in \R$ by $\psi_0(x) \ce \pi^{-1/4} \, s^{-1/2} \, \ee^{- x^2 / (2 s^2)}$, where $s \ce \sqrt{\hbar / (m \omega)}$, and for $q_\alpha \in \N$ one defines $\psi_{q_\alpha} \ce (q_\alpha!)^{-1/2} \, (a_\alpha^\dagger)^{q_\alpha} \, \psi_0$; see Ref.~\cite{Moretti2018}.}, such that
\begin{gather*}
    a_\alpha \psi_0 = 0 \ , \\
    a_\alpha \psi_{q_\alpha} = \sqrt{q_\alpha} \, \psi_{q_\alpha - 1} \quad (q_\alpha \in \N) \ , \\
    a_\alpha^\dagger \psi_{q_\alpha} = \sqrt{q_\alpha + 1} \, \psi_{q_\alpha + 1} \quad (q_\alpha \in \N_0) \ .
\end{gather*}
Define also the number operator $\MFn_\alpha \ce a_\alpha^\dagger a_\alpha$. Then $(\psi_{q_\alpha})_{q_\alpha \in \N_0}$ are eigenvectors of $\MFn_\alpha$ with corresponding eigenvalues $q_\alpha$, that is,
\begin{equation*}
    \MFn_\alpha \psi_{q_\alpha} = q_\alpha \psi_{q_\alpha} \quad (q_\alpha \in \N_0) \ .
\end{equation*}
With the aid of the operator $\MFn_\alpha$, the one-dimensional Hamiltonian $H_\alpha^\mathrm{osc}$ can be rewritten as
\begin{equation}\label{eq:singleOscillatorNumberOperator}
    H_\alpha^\mathrm{osc} = \hbar \omega \left(\MFn_\alpha + \frac{1}{2} \, \id\right) \ ,
\end{equation}
and hence it is evident that $(\psi_{q_\alpha})_{q_\alpha \in \N_0}$ are eigenvectors of $H_\alpha^\mathrm{osc}$, corresponding to the eigenvalues $\set[\big]{\hbar \omega \left(q_\alpha + \frac{1}{2}\right) \, : \, q_\alpha \in \N_0}$. From the decomposition in \cref{eq:decomposition1dOperators,eq:singleOscillatorNumberOperator}, it follows that
\begin{equation*}
    H^\mathrm{osc} = \sum_{\alpha=1}^{3} \hbar \omega \left(\MFn_\alpha + \frac{1}{2} \, \id\right)
\end{equation*}
has the eigenvectors $\psi_{\vecq} = \bigotimes_{\alpha=1}^{3} \psi_{q_\alpha}$, where $\vecq = (q_1, q_2, q_3) \in \N_0^3$, and spectrum $\sigma(H^\mathrm{osc}) = \set[\big]{\sum_{\alpha=1}^{3} \hbar \omega \bigl(q_\alpha + \frac{1}{2}\bigr) \, : \, q_\alpha \in \N_0}$ \cite{GustafsonSigal2020, Teschl2014}.

\subsection{Extension to many oscillators}\label{subsec:manyOscillators}

Next, consider a system of $N \in \N$ independent, identical quantum harmonic oscillators described by dynamical variables $\vecx_i \in \R^3$, $1 \le i \le N$. According to the usual postulates of non-relativistic quantum mechanics \cite{Moretti2018}, the Hilbert space of this system is given by
\begin{equation*}
    \MH_N^\mathrm{osc} = \bigotimes\nolimits^{N} \MH^\mathrm{osc} = \bigotimes\nolimits^{N} L^2(\R^3) = \bigotimes\nolimits^{3N} L^2(\R) \ .
\end{equation*}
Since the oscillators are assumed to be uncoupled, the total Hamiltonian of the system takes the form
\begin{equation}\label{eq:totalHamiltonian}
    H_N^\mathrm{osc} = \sum_{i=1}^{N} H_i^\mathrm{osc} \ ,
\end{equation}
where each $H_i^\mathrm{osc}$ is given by \eqref{eq:singleOscHamiltonian}, and the sum of operators is understood in an analogous way as the one in \cref{eq:decomposition1dOperators}, i.e., $H_N^\mathrm{osc} = \sum_{i=1}^{N} \sum_{\alpha=1}^{3} H_{i, \alpha}^\mathrm{osc}$. Introducing creation and annihilation operators for each $H_{i, \alpha}^\mathrm{osc}$ separately, that is, using the representation \eqref{eq:singleOscillatorNumberOperator}, $H_N^\mathrm{osc}$ can be written as a sum of $3N$ one-dimensional operators (which shall be enumerated by a single index $\alpha \in \set{1, \dotsc, 3N}$ for better readability):
\begin{equation}\label{eq:totalHamiltonianNumberOp}
    H_N^\mathrm{osc} = \sum_{\alpha=1}^{3N} \hbar \omega \left(\MFn_\alpha + \frac{1}{2} \, \id\right) \ .
\end{equation}
In analogy to the remarks made at the end of \cref{subsec:oneOscillator}, one obtains that the eigenvectors $\Psi_N \in \MH_N^\mathrm{osc}$ of $H_N^\mathrm{osc}$ are given by tensor products of the functions $\psi_{q_\alpha}$, $\Psi_N = \bigotimes_{\alpha=1}^{3N} \psi_{q_\alpha}$, and thence they are specified by $3N$ quantum numbers $\set{q_\alpha \in \N_0 \, : \, 1 \le \alpha \le 3 N}$. Similarly, the spectrum of \eqref{eq:totalHamiltonianNumberOp} is found to be $\sigma(H_N^\mathrm{osc}) = \set[\big]{\sum_{\alpha=1}^{3N} \hbar \omega \bigl(q_\alpha + \frac{1}{2}\bigr) \, : \, q_\alpha \in \N_0}$ \cite{GustafsonSigal2020, Teschl2014}.

\subsection{Second quantization formalism}\label{subsec:secondQuantization}

Using the occupation number representation of the second quantization formalism (see, e.g., Refs.~\cite{Zeidler1995} and \cite{RS2,BR2,Coleman2015}), one may represent the states $\Psi_N$ from above in a different form. Let $q \in \N_0$ be the quantum number of the one-oscillator state $\psi_q$ (as introduced in \cref{subsec:oneOscillator}), and let $n_q \in \set{0, \dotsc, N}$ denote the number of times the state $\psi_q$ appears in the $N$-oscillator state $\Psi_N = \bigotimes_{\alpha=1}^{3N} \psi_{q_\alpha}$ from above. Since there are $N$ particles in total, we demand the following normalization condition:
\begin{equation}\label{eq:totalParticleNumber}
    \sum_{q=0}^{\infty} n_q = N \ .
\end{equation}
(Physically, this identity entails that adding together the number of oscillators in every possible quantum state $q \in \N_0$, one arrives at the total number of oscillators, $N$.)

Now, instead of describing the state $\Psi_N$ of the $N$-body system by specifying the $3N$ quantum numbers $\set{q_\alpha \in \N_0 \, : \, 1 \le \alpha \le 3 N}$ of the individual oscillators, as done in \cref{subsec:manyOscillators}, one may also prescribe it \cite{Strocchi2013} in terms of the (infinitely many) occupation numbers $\set{n_q \, : \, q \in \N_0}$; the corresponding representation of the wave function will be denoted by $\Phi_N \equiv \Phi_N(n_0, n_1, \dotsc, n_q, \dotsc)$. A suitable Hilbert space in which to consider the vectors $\Phi_N$, which also allows for the treatment of a variable number of oscillators, is the (symmetric / bosonic) Fock space \cite{RS1}:
\begin{equation*}
    \MF \ce \bigoplus_{N=0}^{\infty} \mathrm{Sym}(\MH_N^\mathrm{osc}) = \bigoplus_{N=0}^{\infty} \biggl(\bigotimes\nolimits^{3N} L^2(\R)\biggr) \ .
\end{equation*}
Elements of $\MF$ are sequences $(\Phi_N)_{N \in \N_0}$, where each $\Phi_N \in \MH_N^\mathrm{osc}$ is a symmetric $L^2$-function of $3N$ variables which is determined by a family $\set{n_q \, : \, q \in \N_0}$ of occupation numbers satisfying $\sum_{q \in \N_0} n_q = N$, such that $\sum_{N=0}^{\infty} \norm{\Phi_N}_N^2 < + \infty$, $\norm{\,\cdot\,}_N$ being the usual norm on the tensor product space $\MH_N^\mathrm{osc}$ \cite{Zeidler1995}.

In analogy to the single harmonic oscillator (\cref{subsec:oneOscillator}), one may introduce for every $q \in \N_0$ creation and annihilation operators $A_q^\dagger, A_q$ on the Fock space \footnote{A suitable domain for these operators is the set of all $(\Phi_M)_{M \in \N_0} \in \MF$ which satisfy $\sum_{M=0}^{\infty} M \norm{\Phi_M}_M^2 < + \infty$, see Ref.~\cite{BR2}.} by specifying their action on each tensor power $\MH_N^\mathrm{osc}$ as follows \cite{Strocchi2013}:
\begin{widetext}
    \begin{align*}
        A_q^\dagger \, \Phi_N(n_0, n_1, \dotsc, n_q, \dotsc) &= \sqrt{n_q + 1} \, \Phi_{N+1}(n_0, n_1, \dotsc, n_q + 1, \dotsc) \ ,\\
        A_q \, \Phi_N(n_0, n_1, \dotsc, n_q, \dotsc) &= \sqrt{n_q} \, \Phi_{N-1}(n_0, n_1, \dotsc, n_q - 1, \dotsc) \ .
    \end{align*}
\end{widetext}
In words: these operators act on the total system and create (respectively, annihilate) particles in a fixed single-oscillator mode $q$. One can also introduce a number operator $\MN_q \ce A_q^\dagger A_q$ \footnote{The domain of this operator is given by $\set[\big]{(\Phi_M)_{M \in \N_0} \in \MF \, : \, \sum_{M=0}^{\infty} M^2 \norm{\Phi_M}_M^2 < + \infty}$, see Refs.~\cite{Zeidler1995} and \cite{BR2}.} which acts on the states $\Phi_N$ according to
\begin{equation}\label{eq:eigenvaluesNumberOp}
    \MN_q \, \Phi_N(\dotsc, n_q, \dotsc) = n_q \Phi_N(\dotsc, n_q, \dotsc) \ .
\end{equation}
Using these operators, the Hamiltonian \eqref{eq:totalHamiltonian} can be represented in the occupation number basis as follows \cite{saku, GustafsonSigal2020}:
\begin{equation}\label{eq:secondQuantizedHamiltonian}
    H_N^\mathrm{osc} = \sum_{q=0}^{\infty} \hbar \omega \left(q + \frac{1}{2}\right) \MN_q \ .
\end{equation}

\section{Ensembles of quantum harmonic oscillators in a particle reservoir}\label{sec:energyConditions}

Consider an open system of quantum harmonic oscillators in contact with an infinite particle reservoir. The total particle number $n$ of the system is a finite but variable quantity described by the operator $\MN = \sum_{q=0}^{\infty} \MN_{q}$. (We follow the convention of the physics literature to denote a variable particle number by a lowercase $n$ and a fixed number of particles by an uppercase $N$.) The Fock space $\MF$ introduced above is the adequate Hilbert space for this system since, in this representation, the total number operator $\MN$ exists as a densely defined linear operator \cite{Strocchi2013}. Using \cref{eq:secondQuantizedHamiltonian}, the effective Hamiltonian $H_n^\mathrm{eff}$ from \cref{eq:effectiveHamiltonian} takes the form
\begin{equation}\label{eq:effectiveHamiltonianOscillators}
    H_n^\mathrm{eff} = \sum_{q=0}^{\infty} \hbar \omega \left(q + \frac{1}{2}\right) \MN_q - \mu \sum_{q=0}^{\infty} \MN_q \ ,
\end{equation}
where the condition \eqref{eq:totalParticleNumber} has to be imposed on the operators $\MN_q$ in order to fix the instantaneous number of particles (\enquote{closure condition}). In the following, we discuss physical conditions on the allowed energy spectrum of $H_n^\mathrm{eff}$ (bosons) and the allowed states of interest for an open system (fermions) for different situations.

\subsection{Condition on the purely vibrational energy spectrum}\label{subsec:conditionVibrationalSpectrum}

First, we discuss the spectrum of \eqref{eq:effectiveHamiltonianOscillators} directly to fix ideas and demonstrate the physical consequences; more realistic examples follow below. According to the discussion in \cref{subsec:secondQuantization}, see \cref{eq:eigenvaluesNumberOp} in particular, the energy eigenvalues of the Hamiltonian \eqref{eq:effectiveHamiltonianOscillators} in the occupation number states $\Phi_n$ are given by
\begin{equation}\label{eq:energySpectrumVibrations}
    E_n^\mathrm{eff} = \sum_{q=0}^{\infty} \hbar \omega \left(q + \frac{1}{2}\right) n_{q} - \mu \sum_{q=0}^{\infty} n_q \ ,
\end{equation}
with the single-oscillator energy level $q \in \N_0$ and $n_q \in \set{0, \dotsc, n}$ its occupation number. (Note that the occupation numbers follow a distribution; see \cref{eq:occupationNumberDist} in \cref{app:proofConvergence}. In particular, the range $\set{0, \dotsc, n}$ of $n_q$ mentioned before is just the integer part of the distribution for a specific $q$-value.) We emphasize that \eqref{eq:energySpectrumVibrations} is the consequence of two rigorous mathematical results: (i) the derivation of the effective open-system Hamiltonian from first principles in Ref.~\cite{ana}, and (ii) the well-known formalism of second quantization.

According to \cref{lem:convergenceOsc} proved in \cref{app:proofConvergence}, the effective energy can also be written as (see also \cref{cor:effectiveEnergies})
\begin{equation*}
    E_n^\mathrm{eff} = \sum_{q=0}^{\infty} \left[\hbar \omega \left(q + \frac{1}{2}\right) - \mu \right] n_q \ .
\end{equation*}
With this expression available, we can now formulate our two central physical assumptions for bosons. (i)  Since we are dealing with non-interacting bosons, each of which has positive energy that does not influence the energy of the others, we expect that $E_n^\mathrm{eff}$, considered as an \textit{effective} energy of the system, is positive:\footnote{From a mathematical standpoint, it is clear that for bosons the Hamiltonian \eqref{eq:secondQuantizedHamiltonian} of the oscillators is positive. The additional term $- \mu \MN$ in the effective Hamiltonian models a flux of particles between the system and the reservoir. Therefore, from a physical standpoint, one can expect that the energy remains positive because adding or removing particles cannot create negative energy states if particles are not interacting in any form.}
\begin{equation}\label{eq:vibrationalEnergyCondition}
    E_n^\mathrm{eff}= \sum_{q=0}^{\infty} \left[\frac{1}{2} \, \hbar \omega + q \hbar \omega - \mu\right] n_{q} > 0 \ .
\end{equation}
The situation is different for fermions as will be explained below. (ii) Furthermore, one can interpret the presence of the term $- \mu \MN$ in the effective Hamiltonian \eqref{eq:effectiveHamiltonianOscillators} as a shift of the vibrational frequency $\omega$, as was done, for example, by N. N. Bogoliubov in his model for fermionic superconductors \cite{bogo}. The rationale is the following: the term $- \mu \MN$ can be considered as a perturbation (mediating a particle flux) of the Hamiltonian $H_n^\mathrm{osc}$ that results in a change of the oscillation frequency of the system of harmonic oscillators. Looking at the expression \eqref{eq:vibrationalEnergyCondition} suggests to interpret the term in the square brackets as an effective oscillation frequency:
\begin{equation*}
    \hbar \omega_\mathrm{eff}(q) \ce \frac{1}{2} \, \hbar \omega + q \hbar \omega - \mu \quad (q \in \N_0) \ .
\end{equation*}
Then the effective energy is given by $E_n^\mathrm{eff} = \sum_{q=0}^{\infty} \hbar \omega_\mathrm{eff}(q) \, n_q$, which is a reasonable expression on physical grounds \cite{Coleman2015}. Now, since the effective oscillation frequency must be a positive quantity, we demand that $\hbar \omega_\mathrm{eff}(q) > 0$, i.e.,
\begin{equation*}
    \frac{1}{2} \, \hbar \omega + q \hbar \omega - \mu > 0 \ .
\end{equation*}
From this condition, one obtains a strict lower bound on the vibrational quantum number $q$, namely,
\begin{equation*}
    q > q_\mathrm{min}(\mu) \ce \frac{\mu}{\hbar \omega} - \frac{1}{2} \ .
\end{equation*}
This identity entails that there are quantum states which are not accessible by the system, depending on the chemical potential $\mu$. Furthermore, the effective energy, and hence the energy spectrum of the system, is modified according to
\begin{equation*}
    E_n^\mathrm{eff} = \sum_{q>q_\mathrm{min}(\mu)} \left[\frac{1}{2} \, \hbar \omega + q \hbar \omega - \mu\right] n_{q} \ .
\end{equation*}

As discussed above, the results of this section and their interpretation in terms of accessible states certainly hold for bosons. However, assumptions (i) and (ii) may not hold for fermions where even in the absence of a direct interaction, one still has the Pauli exclusion principle, which, as a matter of fact, countermands the particle independence by introducing correlations \cite{Coleman2015}. As a consequence, the addition or removal of particles cannot happen without the response of the other particles of the system. This means that bound states \footnote{In this context of open systems, we mean by ``bound states'' those that have an energy below the chemical potential, implying that they cannot migrate to the reservoir (see also \cref{subsec:fermions}).} may be possible due to the Pauli principle, and thus, the condition on $q$ necessary to guarantee a positive effective frequency $\omega_\mathrm{eff}(q)$ has a different interpretation compared to the case of bosons. This aspect will be discussed next.

\subsection{Interpretation of the condition for fermions}\label{subsec:fermions}

For an open system of independent fermionic oscillators (assumed spinless for simplicity), the average number of particles $\braket{n}$ is given by the Fermi-Dirac distribution \cite{huang,mazenko},
\begin{equation*}
    \braket{n} = \sum_{q=0}^{\infty} \frac{1}{\exp\bigl\{\beta \bigl[\hbar \omega \left(q + \frac{1}{2}\right) - \mu\bigr]\bigr\} + 1} \ .
\end{equation*}
Evidently, states satisfying $q < q_\mathrm{min}(\mu)$, where $q_\mathrm{min}(\mu) = \frac{\mu}{\hbar \omega} - \frac{1}{2}$ as above, will provide the most important contributions to $\braket{n}$ while states with $q > q_\mathrm{min}(\mu)$ represent only the tail of the distribution; this is a completely different situation compared to bosons, where the absence of bound states (in the sense specified in Ref.~\cite{Note7} and below) is reflected also in the particle distribution, cf. \cref{rem:conditionQ}. As a consequence of the indicated statistical preference of fermions towards states with $q < q_\mathrm{min}(\mu)$, the majority of occupied states would have negative effective energy, thus representing bound states. This is the expected behavior of the well-known fermionic oscillator model \cite{Woit2017}.

To clarify the meaning of the terminology ``bound state'' in such a situation, one can resort to the thermodynamic interpretation of the chemical potential at finite temperature and place it into the context of a system that statistically exchanges particles (incoming and outgoing) with a reservoir. The chemical potential is the average energy per particle that the system requires in order to allow a particle to be exchanged with the reservoir. This means that particles of the system occupying an energy state with energy smaller than $\mu$ do not have enough energy to pass, statistically, to the reservoir and, thus, are confined to the system. Instead, particles in energy states characterized by the condition $q > q_\mathrm{min}(\mu)$ are those that can be exchanged. The latter are actually the relevant particles for quantum devices based on exchange of information with a reservoir. A concrete experimental realization of this principle is given by low-dimensional spin systems in contact with microscopic electron currents induced by radical molecules at a surface \cite{radicals}.

The foregoing discussion about fermions for the case of pure oscillators can be extended straightforwardly to the cases treated in the next sections.

\subsection{Ideal quantum gas of vibrating particles}

Now, we consider a quantum gas of identical molecules contained in a finite cube $\Omega \ce [0, L]^3 \subset \R^3$ of side length $L > 0$, in which each molecule possesses, in addition to its vibrational degrees of freedom described by \eqref{eq:singleOscHamiltonian}, also translational degrees of freedom (e.g., a dilute gas of diatomic molecules).

Consider first a single molecule of mass $m$ whose translational motion is described by the dynamical variable $\vecX = (X^{(1)}, X^{(2)}, X^{(3)}) \in \R^3$. The corresponding dynamics is determined by the free Hamiltonian
\begin{align*}
    H^\mathrm{tra} &= \frac{P^2}{2m} = - \frac{\hbar^2}{2m} \, \Delta_\vecX = \sum_{\alpha=1}^{3} - \frac{\hbar^2}{2m} \frac{\partial^2}{\partial (X^{(\alpha)})^2} \\
    &\equiv \sum_{\alpha=1}^{3} H_\alpha^\mathrm{tra}
\end{align*}
in the Hilbert space $L^2(\Omega)$. Similarly to the Hamiltonian $H^\mathrm{osc}$ in \cref{subsec:oneOscillator}, the operator $H^\mathrm{tra}$ factorizes into three one-dimensional operators, corresponding to a factorization of the Hilbert space as $L^2(\Omega) = \bigotimes^3 L^2([0, L])$. We impose periodic boundary conditions by choosing as the domain for each one-dimensional operator $H_\alpha^\mathrm{tra}$ the subspace
\begin{equation*}
    \set[\big]{\psi \in H^2([0, L]) \ : \ \psi(0) = \psi(L) \ \text{and} \ \psi^\prime(0) = \psi^\prime(L)} \ ,
\end{equation*}
where $H^2([0, L])$ is the Sobolev space of second order \cite{Schmüdgen2012}. It holds that $H_\alpha^\mathrm{tra}$ is self-adjoint on this domain \cite{Schmüdgen2012}, and its spectrum is given by the eigenvalues \cite{saku, Teschl2014, Oliveira2009}
\begin{equation}\label{eq:translationEnergies}
    \varepsilon_{k_\alpha} = \frac{4 \pi^2 \hbar^2}{2 m L^2} \, k_\alpha^2 \ , \quad k_\alpha \in \Z \ .
\end{equation}
The corresponding eigenfunctions of $H_\alpha^\mathrm{tra}$ (plane waves), which form an orthonormal basis of $L^2([0, L])$, shall be denoted by $(\varphi_{k_\alpha})_{k_\alpha \in \Z}$.

Assume that the single molecule also possesses vibrational degrees of freedom, as discussed in \cref{subsec:oneOscillator}. The total Hamiltonian for the translational and vibrational motion acts in the Hilbert space
\begin{equation*}
    \MH^\mathrm{tot} = L^2(\Omega) \otimes L^2(\R^3) \cong \bigotimes\nolimits^3 \Bigl(L^2\bigl([0, L]\bigr) \otimes L^2(\R)\Bigr) \ ,
\end{equation*}
and it is formally given by
\begin{align*}
    H^\mathrm{tot} &= H^\mathrm{tra} \otimes \id_{L^2(\R^3)} + \id_{L^2(\Omega)} \otimes H^\mathrm{osc} \\
    &\equiv \sum_{\alpha=1}^{3} \bigl(H_\alpha^\mathrm{tra} + H_\alpha^\mathrm{osc}\bigr) \ .
\end{align*}
It holds that the operators $H_\alpha^\mathrm{tra} \equiv H_\alpha^\mathrm{tra} \otimes \id_{L^2(\R)}$ and $H_\alpha^\mathrm{osc} \equiv \id_{L^2([0, L])} \otimes H_\alpha^\mathrm{osc}$ strongly commute on the space $L^2([0, L]) \otimes L^2(\R)$ \cite{Schmüdgen2012}, thence we have
\begin{equation*}
    \bigl[H_\alpha^\mathrm{tra}, H_\alpha^\mathrm{osc}\bigr] = 0
\end{equation*}
on a suitable domain. Therefore, the spectrum of $H_\alpha^\mathrm{tra} + H_\alpha^\mathrm{osc}$ is given by the sum of the individual spectra, $\sigma(H_\alpha^\mathrm{tra} + H_\alpha^\mathrm{osc}) = \sigma(H_\alpha^\mathrm{tra}) + \sigma(H_\alpha^\mathrm{osc})$ \cite{Schmüdgen2012}, and its eigenvectors are $\varphi_{k_\alpha} \otimes \psi_{q_\alpha}$, $k_\alpha \in \Z$, $q_\alpha \in \N_0$.

The previous observations are readily generalized to a system of $N \in \N$ molecules with translational and vibrational degrees of freedom (see also the discussion in \cref{subsec:manyOscillators}): the total Hilbert space is given by
\begin{equation*}
    \MH_N^\mathrm{tot} = \bigotimes\nolimits^{3N} \Bigl(L^2\bigl([0, L]\bigr) \otimes L^2(\R)\Bigr) \ ,
\end{equation*}
and the total Hamiltonian can be written as
\begin{equation*}
    H_N^\mathrm{tot} = \sum_{\alpha=1}^{3 N} \bigl(H_\alpha^\mathrm{tra} + H_\alpha^\mathrm{osc}\bigr) \ .
\end{equation*}
The states of this system are specified by the $6 N$ quantum numbers $(k_\alpha, q_\alpha)$, $k_\alpha \in \Z$, $q_\alpha \in \N_0$, $1 \le \alpha \le 3 N$. Following the procedure outlined in \cref{subsec:secondQuantization}, instead of specifying an $N$-body state by these $6 N$ numbers, we may introduce for every $k \in \Z$ and $q \in \N_0$ number operators $\MN_{k, q} = A_{k, q}^\dagger A_{k, q}$ whose eigenvalues $n_{k, q} \in \set{0, \dotsc, N}$ count the number of molecules in each one-particle state $(k, q)$:
\begin{equation*}
    \MN_{k, q} \Phi_N(\dotsc, n_{k, q}, \dotsc) = n_{k, q} \Phi_N(\dotsc, n_{k, q}, \dotsc) \ .
\end{equation*}
As before in \cref{eq:totalParticleNumber}, we have to demand that $\sum_{k \in \Z} \sum_{q \in \N_0} n_{k, q} = N$. With this and $\varepsilon_k$ from \cref{eq:translationEnergies}, the total Hamiltonian $H_N^\mathrm{tot}$ can be represented as follows:
\begin{equation}\label{eq:totalHamiltonianTraOsc}
    H_N^\mathrm{tot} = \sum_{k \in \Z} \sum_{q \in \N_0} \left[\varepsilon_k + \hbar \omega \left(q + \frac{1}{2}\right)\right] \MN_{k, q} \ .
\end{equation}

Similarly to \cref{eq:effectiveHamiltonianOscillators}, the effective Hamiltonian \eqref{eq:effectiveHamiltonian} for an open system of molecules as characterized above, in which the particle number $n$ is now again a variable, is given by
\begin{align*}
    H_n^\mathrm{eff} &= \sum_{k \in \Z} \sum_{q \in \N_0} \left[\varepsilon_k + \hbar \omega \left(q + \frac{1}{2}\right)\right] \MN_{k, q} \\
    &\quad - \mu \sum_{k \in \Z} \sum_{q \in \N_0} \MN_{k, q} \ .
\end{align*}
In the occupation number state $\Phi_n(\dotsc, n_{k, q}, \dotsc)$, its eigenvalue is
\begin{align}\label{eq:energySpectrumTranslationsVibrations}
    \begin{split}
        E_n^\mathrm{eff} &= \sum_{k \in \Z} \sum_{q \in \N_0} \left[\varepsilon_k + \hbar \omega \left(q + \frac{1}{2}\right)\right] n_{k, q} \\
        &\quad - \mu \sum_{k \in \Z} \sum_{q \in \N_0} n_{k, q} \ .
    \end{split}
\end{align}
According to \cref{lem:convergenceTraOsc} and \cref{cor:effectiveEnergies} proved in \cref{app:proofConvergence}, this expression can also be written as
\begin{equation*}
    E_n^\mathrm{eff} = \sum_{k \in \Z} \sum_{q \in \N_0} \left[ \varepsilon_k + \frac{1}{2} \hbar \omega + \hbar \omega q - \mu \right] n_{k, q} \ .
\end{equation*}
With the same physical arguments as in \cref{subsec:conditionVibrationalSpectrum} regarding the expected positivity of the effective energy of a system of non-interacting particles, the mandatory condition for the energy spectrum is:
\begin{equation}\label{eq:energyIdealGasExist}
    E_n^\mathrm{eff} = \sum_{k \in \Z} \sum_{q \in \N_0} \left[ \varepsilon_k + \frac{1}{2} \hbar \omega + \hbar \omega q - \mu \right] n_{k, q} > 0 \ .
\end{equation}
As before, if one considers the presence of $\mu$ as a shift of the translational-vibrational frequency, thus interpreting the term in the square brackets as an effective frequency, then one obtains the stricter condition
\begin{equation}\label{eq:conditionQ}
    q > q_\mathrm{min}(\mu, k) \ce \frac{\mu}{\hbar \omega} - \frac{\varepsilon_k}{\hbar \omega} - \frac{1}{2} \quad \text{for all} \quad k \in \Z \ .
\end{equation}
Equations \eqref{eq:energyIdealGasExist} and \eqref{eq:conditionQ} extend the well-known condition for the existence of an ideal quantum gas of point-like bosonic particles obtained in the grand canonical ensemble, namely, $\varepsilon_k > \mu$ for all $k \in \Z$, see \cref{eq:existenceBoseGas} in \cref{app:idealBoseGas} and \cref{rem:conditionQ} in \cref{app:proofConvergence}.

\subsection{Linear chain of coupled harmonic oscillators}

Finally, we want to illustrate the use of the effective Hamiltonian \eqref{eq:effectiveHamiltonian} in the case of a linear chain of coupled, identical harmonic oscillators. (As reference, we use the formalism of Refs.~\cite{Coleman2015} and \cite{eisert}.) The Hamiltonian of $N \in \N$ coupled oscillators is given by \cite{eisert}
\begin{equation*}
    H_N^\mathrm{spr} = \sum_{j=1}^{N} \left[\frac{p_j^2}{2m} + \frac{1}{2} \, m \omega^2 x_j^2 + \frac{c}{2} \, m \omega^2 (x_j - x_{j+1})^2\right] \ ,
\end{equation*}
where $p_j$ is the momentum operator and $x_j$ the position operator of the $j$-th oscillator, and $c > 0$ is a coupling constant. Assuming periodic boundary conditions ($x_{j + N} = x_j$ and $p_{j + N} = p_j$) and performing a Fourier transformation of the coordinates and momenta,
\begin{align*}
    x_j &= \frac{1}{\sqrt{N}} \, \sum_{s=1}^{N} \ee^{2 \pi \ii s j / N} \, \wt{X}_s \ , \\
    p_j &= \frac{1}{\sqrt{N}} \, \sum_{s=1}^{N} \ee^{2 \pi \ii s j / N} \, \wt{P}_s \ ,
\end{align*}
the Hamiltonian can be written as
\begin{equation*}
    H_N^\mathrm{spr} = \sum_{s=1}^{N} \left[\frac{1}{2m} \, \wt{P}_s \wt{P}_{-s} + \frac{1}{2} \, m \omega_s^2 \wt{X}_s \, \wt{X}_{-s}\right] \ ,
\end{equation*}
where $\omega_s^2 = \omega^2 \left[1 + 4 c \sin^2\left(\frac{\pi s}{N}\right)\right]$. Introducing again (in analogy to \cref{subsec:oneOscillator}) creation and annihilation operators,
\begin{align*}
     a_s^\dagger &\ce \sqrt{\frac{m \omega_s}{2 \hbar}} \left(\wt{X}_{-s} - \frac{\ii}{m \omega_s} \, \wt{P}_{-s}\right) \ , \\
     a_s &\ce \sqrt{\frac{m \omega_s}{2 \hbar}}\left(\wt{X}_s + \frac{\ii}{m \omega_s} \, \wt{P}_s\right) \ ,
\end{align*}
and corresponding number operators $\MFn_s \ce a_s^\dagger a_s$, whose eigenvalues $q_s \in \N_0$ correspond to the energy level of the $s$-th oscillator, it follows that $H^\mathrm{spr}$ can be written in the familiar form
\begin{align*}
    H_N^\mathrm{spr} &= \sum_{s=1}^{N} \hbar \omega_s \left(a_s^\dagger a_s + \frac{1}{2} \, \id\right) \\
    &= \sum_{s=1}^{N} \hbar \omega_s \left(\MFn_s + \frac{1}{2} \, \id\right) \ .
\end{align*}

Evidently, $H^\mathrm{spr}$ has the same form as the Hamiltonian \eqref{eq:totalHamiltonianNumberOp} for an ensemble of independent harmonic oscillators treated in the previous sections, except that the frequency $\omega_s$ in the present case is different for every oscillator. One realizes, however, that one can still write it in a second quantized form, analogously to \eqref{eq:secondQuantizedHamiltonian}, as follows: let $q \in \N_0$ be arbitrary and define $\MN_q$ to be the number operator counting the number of times the one-particle state $q$ is occupied in the $N$-particle system, cf. \cref{eq:eigenvaluesNumberOp}. Furthermore, let $S_q \ce \set{s \, : \, 1 \le s \le N, \, q_s = q}$ be the set of $s$-values (that is, particle indices) for which the one-particle quantum number $q_s$ equals $q$. Then $H^\mathrm{spr}$ can be written equivalently as
\begin{equation*}
    H_N^\mathrm{spr} = \sum_{q=0}^{\infty} \, \Biggl[\,\sum_{s \in S_q} \hbar \omega_s \left(q + \frac{1}{2}\right)\Biggr] \, \MN_q \ .
\end{equation*}

With this expression, we obtain as an effective Hamiltonian for an open system with variable particle number $n$, in analogy to \cref{eq:effectiveHamiltonianOscillators}:
\begin{equation*}
    H_n^\mathrm{eff} = \sum_{q=0}^{\infty} \, \Biggl[\,\sum_{s \in S_q} \hbar \omega_s \left(q + \frac{1}{2}\right)\Biggr] \, \MN_q - \mu \sum_{q=0}^{\infty} \MN_q \ .
\end{equation*}
The eigenvalues of this operator in the occupation number states are given by
\begin{equation*}
    E_n^\mathrm{eff} = \sum_{q=0}^{\infty} \, \Biggl[\,\sum_{s \in S_q} \hbar \omega_s \left(q + \frac{1}{2}\right)\Biggr] \, n_q - \mu \sum_{q=0}^{\infty} n_q \ .
\end{equation*}
Therefore, as in the previous cases, the condition for the energy spectrum of an open linear chain of harmonic oscillators reads
\begin{equation*}
    E_n^\mathrm{eff} = \sum_{q=0}^{\infty} \, \Biggl[\,\sum_{s \in S_q} \hbar \omega_s \left(q + \frac{1}{2}\right) - \mu\Biggr] \, n_q > 0 \ .
\end{equation*}
Furthermore, if one interprets as before the presence of $\mu$ as a shift of the vibrational frequency of a single oscillator, thus treating the term inside the square brackets as an effective frequency, then one obtains the stricter condition
\begin{equation*}
    q > q_\mathrm{min}(\mu, n) \ce \frac{\mu}{\sum_{s \in S_q} \hbar \omega_s} - \frac{1}{2} \ .
\end{equation*}
Observe that the frequency $\omega_s$, and hence $q_\mathrm{min}$, depends on the particle number $n$ present in the open system. This feature implies the possibility of controlling $n$ by manipulating the external environment, and thus, it introduces a further parameter of control to determine  the accessible quantum states, in addition to the frequency shifting provided by $\mu$.

\section{Discussion and Conclusion}

We have applied the idea of an effective Hamiltonian for open quantum systems with varying number of particles to a prototype system of physics occurring ubiquitously in the modeling of many physical systems, the harmonic oscillator. We have shown that for an ideal gas of harmonic oscillators, there is a mandatory condition on the accessible quantum states and, hence, on the energy spectrum for bosons while representing the ``non-bound'' states for fermions, that is, the states of interest for the exchange of particles with a reservoir. This condition is directly linked to the chemical potential of the system, which, in this sense, can act as an external parameter of control of the spectrum and of the occupation number of particles in the available vibrational and translational states. The present result extends the well-known mandatory condition for an ideal gas of point-like bosonic particles; in the latter case, the energy spectrum is characterized only by the wave number associated with translation.

Furthermore, we have discussed that the open ensemble of independent harmonic oscillators can be extended to a chain of coupled oscillators with nearest-neighbor interactions, building on the formal reduction of the coupled Hamiltonian to the Hamiltonian of uncoupled oscillators, which is well-established in literature. The physics of such a system in the presence of a particle reservoir introduces additional features: by deriving an expression for the energy spectrum using the effective Hamiltonian of an open system and establishing, similarly to the case of an ensemble of non-interacting harmonic oscillators, a mandatory condition on the spectrum of the chain, we have found that the accessible part of the spectrum for bosons, respectively, the part relevant for information exchange for fermions, becomes explicitly $n$-dependent.

The utility of the current results for future applications lies in the fact that models of open systems with variable number of particles can cover a rather broad class of physical situations, certainly more than the model at fixed number of particles. An interesting example, mentioned in the introduction, concerns the idea of externally manipulating the chemical potential in order to control the spectral properties of systems used in modern quantum technology; in this context, our results provide an explicit, basic formal ground for such experimental protocols. From a larger perspective, one may imagine that these results can be useful for several other problems: for example, the open ensemble of harmonic oscillators with a variable number of particles may be used as a basic approach to the popular problem of treating entanglement in a quantum (Fermi) gas in an harmonic trap \cite{riccarda}. Moreover, the applications of this approach can be extended to the modeling and simulation of heat and mass transport in molecular systems \cite{roya1,roya2}, in low-dimensional materials \cite{TTT18}, and in computational modeling for hybrid particle-continuum models \cite{TTT17}.

From a conceptual point of view, the harmonic oscillator is a simple yet paradigmatic model for the formal derivation of explicit formulas that can describe universal generic properties. In this context, we mention as a concrete example the recent proposal \cite{carsten} for a possible definition of a ``non-equilibrium'' free energy, which was established in the framework of coupled harmonic oscillators and which gives rise to estimates of universal trends in heat transport in particle systems. In further investigations in the future, the results of the present paper can be combined with the results of Ref.~\cite{carsten}, and one can consider a broader class of physical systems and situations.

\begin{acknowledgments}
    This work was supported by the DFG Collaborative Research Center 1114 ``Scaling Cascades in Complex Systems'', project no. 235221301, projects  C01 (L.D.S. and B.M.R.) ``Adaptive Coupling of Scales in Molecular Dynamics and Beyond to Fluid Dynamics'' and C10 ``Numerical Analysis for Nonlinear SPDE Models of Particle Systems'' (A.Dj.). Further support by the DFG, project No. DFG-DE 1140/15, ``Mathematical Model and Computational Implementation of Open Quantum Systems for Molecular Simulations'' (L.D.S.) is acknowledged.
\end{acknowledgments}

\section*{Author declarations}

\subsection*{Conflict of Interest}

The authors have no conflicts to disclose.

\subsection*{Author Contributions}

\textbf{Benedikt M. Reible}: Conceptualization (lead); Formal Analysis (equal); Methodology (lead); Writing -- original draft (lead); Writing -- review \& editing (equal). \textbf{Ana Djurdjevac}: Formal Analysis (equal); Writing -- original draft (supporting); Writing -- review \& editing (equal). \textbf{Luigi Delle Site}: Conceptualization (lead); Formal Analysis (equal); Methodology (lead); Writing -- original draft (lead); Writing -- review \& editing (equal).

\section*{Data Availability}

Data sharing is not applicable to this article, as no new data
were created or analyzed in this study.

\appendix
\section{The ideal Bose gas}\label{app:idealBoseGas}

The grand canonical partition function for an ideal gas of bosons is given by \cite{huang,mazenko}
\begin{equation*}
    Z(\mu, V, T) = \prod_{k \in \Z} \frac{1}{1-e^{-\beta(\varepsilon_{k}-\mu)}} \ .
\end{equation*}
Here, $\mu \in \R$ is the chemical potential, $V > 0$ the volume, and $T > 0$ the temperature; furthermore, $k \in \Z$ labels the different one-particle states, and $\varepsilon_k = \hbar^2 k^2 / (2 m)$ denotes the corresponding energy. The average occupation number $\braket{n_k}$ of the state $k$ takes the form
\begin{equation*}
    \braket{n_k} = \frac{1}{\ee^{\beta (\varepsilon_k - \mu)} - 1} \ .
\end{equation*}
From this expression, one obtains a mandatory condition on the spectrum of the ideal Bose gas. Indeed, since based on physical grounds it must hold true that $\braket{n_k} \ge 0$ for all $k \in \Z$, it follows that
\begin{equation}\label{eq:existenceBoseGas}
    \varepsilon_k - \mu > 0
\end{equation}
for every state $k$. (The case of equality is excluded in order for $Z(\mu, T, V)$ and $\braket{n_k}$ to be well-defined.) In particular, this means that $\mu$ must be less than or equal to the ground state energy of the system.

\section{Proof of convergence of the effective energy}\label{app:proofConvergence}

In the following, it will be shown that the series appearing in the effective energies of the vibrational model \eqref{eq:energySpectrumVibrations} and the translational-vibrational model \eqref{eq:energySpectrumTranslationsVibrations} converge. Recall from \cref{eq:totalHamiltonianTraOsc} that the energy of a system of $n$ molecules with translational and vibrational degrees of freedom in the occupation number state $\Phi_n(\dotsc, n_{k, q}, \dotsc)$ is given by
\begin{equation}\label{eq:energyValuesTraOsc}
    E_n = \sum_{k \in \Z} \sum_{q \in \N_0} \left[\varepsilon_k + \hbar \omega \left(q + \frac{1}{2}\right)\right] n_{k, q} \ ,
\end{equation}
where $\varepsilon_k = 4 \pi^2 \hbar^2 k^2 / (2 m L^2)$ for all $k \in \Z$ and it is assumed that $\sum_{k \in \Z} \sum_{q \in \N_0} n_{k, q} = n$. Since we are investigating open systems in contact with an infinite particle reservoir, in thermal equilibrium the occupation numbers $n_{k, q}$ can be obtained from the grand canonical ensemble, and they are given by the Bose-Einstein (\enquote{$-$}), respectively, Fermi-Dirac (\enquote{$+$}) distribution \cite{huang,mazenko}:
\begin{align}\label{eq:occupationNumberDist}
    \begin{split}
        n_{k, q} &\equiv \braket{n_{k, q}} \\
        &= \frac{1}{\exp\Bigl\{\beta \bigl[\varepsilon_k + \hbar \omega \left(q + \frac{1}{2}\right) - \mu\bigr]\Bigr\} \mp 1} \ .
    \end{split}
\end{align}

\begin{remark}\label{rem:conditionQ}
    It is noteworthy to observe at this point that for bosons, one can actually see that the strict condition on the quantum number $q$, derived with the argument of positivity of the effective frequency in \cref{eq:conditionQ}, is recovered by requiring the occupation number $n_{k, q}$ to be non-negative, which means that the exponent in the above expression must be non-negative, see also \cref{app:idealBoseGas}.
\end{remark}

Assuming either Bose-Einstein or Fermi-Dirac statistics and using the above expression for $n_{k, q}$, one can show convergence of \eqref{eq:energyValuesTraOsc}.

\begin{lemma}\label{lem:convergenceTraOsc}
    The following double series converges: $\sum_{k \in \Z} \sum_{q \in \N_0} \left[\varepsilon_k + \hbar \omega \left(q + \frac{1}{2}\right)\right] n_{k, q}$.
\end{lemma}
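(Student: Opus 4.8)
The plan is to feed the explicit equilibrium form \eqref{eq:occupationNumberDist} of the occupation numbers into a comparison test, bounding the summand by an exponentially decaying, factorizable series. Write the single-particle energy as $E_{k, q} \ce \varepsilon_k + \hbar \omega \bigl(q + \frac{1}{2}\bigr)$, so the summand is $E_{k, q} \, n_{k, q}$ with $n_{k, q} = [\ee^{\beta(E_{k, q} - \mu)} \mp 1]^{-1}$. First I would note that every term is non-negative: the energies $E_{k, q}$ are positive, and the occupation numbers are non-negative (for bosons this uses precisely the condition that $\mu$ lie below the ground-state energy $\frac{1}{2}\hbar\omega$, i.e.\ $E_{k, q} - \mu > 0$ for all $k, q$, singled out in \cref{subsec:conditionVibrationalSpectrum}; for fermions $0 \le n_{k, q} \le 1$ automatically). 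Non-negativity lets me invoke Tonelli's theorem for series, so the double series may be summed in any order and, in particular, factorized once a suitable product bound is in hand.

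The key estimate is an exponential upper bound on $n_{k, q}$. For fermions, $n_{k, q} = [\ee^{\beta(E_{k, q} - \mu)} + 1]^{-1} \le \ee^{-\beta(E_{k, q} - \mu)}$, an inequality valid for every value of the exponent. For bosons, set $\delta \ce \beta\bigl(\frac{1}{2}\hbar\omega - \mu\bigr) > 0$; then $\beta(E_{k, q} - \mu) \ge \delta$ for all $k, q$, whence
\begin{equation*}
    n_{k, q} = \frac{\ee^{-\beta(E_{k, q} - \mu)}}{1 - \ee^{-\beta(E_{k, q} - \mu)}} \le \frac{1}{1 - \ee^{-\delta}} \, \ee^{-\beta(E_{k, q} - \mu)} \ .
\end{equation*}
In either case there is a constant $C > 0$ (depending on $\beta, \mu, \omega$) with $n_{k, q} \le C \, \ee^{-\beta E_{k, q}}$, so that $E_{k, q} \, n_{k, q} \le C \, E_{k, q} \, \ee^{-\beta E_{k, q}}$.

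It then remains to show that the comparison series $\sum_{k \in \Z} \sum_{q \in \N_0} E_{k, q} \, \ee^{-\beta E_{k, q}}$ converges. Using $\varepsilon_k = c k^2$ with $c \ce 4\pi^2\hbar^2/(2mL^2) > 0$ and the product structure $\ee^{-\beta E_{k, q}} = \ee^{-\beta \varepsilon_k} \, \ee^{-\beta \hbar\omega(q + 1/2)}$, I would split $E_{k, q} = \varepsilon_k + \hbar\omega\bigl(q + \frac12\bigr)$ and write the comparison series as a sum of two products of one-dimensional series:
\begin{align*}
    &\Biggl(\sum_{k \in \Z} \varepsilon_k \, \ee^{-\beta \varepsilon_k}\Biggr) \Biggl(\sum_{q \in \N_0} \ee^{-\beta\hbar\omega(q + 1/2)}\Biggr) \\
    &\quad + \Biggl(\sum_{k \in \Z} \ee^{-\beta \varepsilon_k}\Biggr) \Biggl(\sum_{q \in \N_0} \hbar\omega\bigl(q + \tfrac12\bigr) \ee^{-\beta\hbar\omega(q + 1/2)}\Biggr) \ .
\end{align*}
Each factor converges: the sums over $k$ are Gaussian theta-type series (convergent because $\ee^{-\beta c k^2}$ and $k^2 \ee^{-\beta c k^2}$ decay super-exponentially in $\abs{k}$), while the sums over $q$ are a geometric series and a geometric series with a linear prefactor, both convergent since $\ee^{-\beta\hbar\omega} < 1$. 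Hence the comparison series is finite, and by the comparison test together with Tonelli's theorem the original double series converges.

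The main obstacle I anticipate is conceptual rather than computational: the exponential bound on $n_{k, q}$ in the bosonic case hinges on the uniform spectral gap $E_{k, q} - \mu \ge \frac12\hbar\omega - \mu > 0$, which is exactly the condition discussed in \cref{subsec:conditionVibrationalSpectrum} (equivalently, $\mu$ below the ground-state energy). One must therefore state clearly that the lemma is understood under that condition, without which the individual $n_{k, q}$---and a fortiori the series---need not even be well-defined. Everything downstream reduces to routine comparison with convergent Gaussian and geometric series.
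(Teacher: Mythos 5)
Your proof is correct, but it follows a genuinely different route from the paper's. The paper does not use an exponential bound on $n_{k,q}$ at all: it bounds the \emph{denominator} from below by a polynomial, using $\ee^{r} \ge r^{5}/5!$ with the single variable $r \ce q + k^{2}$ (after setting $\beta$ and the physical constants to unity), so that the summand is dominated by a multiple of $1/r^{4}$; the resulting double sum $\sum_{k \ge 1}\sum_{r \ge k^{2}} r^{-4}$ is then evaluated via the polygamma function $\psi^{(3)}(k^{2})$ and Riemann zeta values, yielding an explicit numerical upper bound. You instead bound the \emph{numerator-times-occupation} by $C\,E_{k,q}\,\ee^{-\beta E_{k,q}}$ and exploit the product structure $\ee^{-\beta E_{k,q}} = \ee^{-\beta \varepsilon_k}\,\ee^{-\beta\hbar\omega(q+1/2)}$ to factorize the comparison series into Gaussian-theta and geometric-type one-dimensional sums. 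Both arguments hinge on the same hypothesis in the bosonic case --- $\mu$ strictly below the ground-state energy $\tfrac12\hbar\omega$, which the paper encodes as $C = \ee^{1/2-\mu} > 1$ and you encode as the gap $\delta > 0$ --- and you are right to flag that the lemma is only meaningful under that condition. What each approach buys: yours is shorter, keeps $\beta$ and the physical constants general, and extends immediately to any dispersion $\varepsilon_k \to \infty$ with summable Boltzmann tails; the paper's is more elementary (no Tonelli, no gap estimate beyond $C>1$) and produces a concrete closed-form bound $\tfrac{1}{C}\bigl(\tfrac{4\pi^{4}}{3} + \tfrac{16\pi^{6}}{189} + \tfrac{8\pi^{8}}{315}\bigr)$ on the (normalized) series. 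One small presentational point: since all terms are non-negative, convergence of a double series is unambiguous and Tonelli is not strictly needed beyond justifying the reordering into the factorized form, which you do invoke it for correctly.
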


\begin{proof}
    For simplicity, we set the multiplicative constants $4 \pi^2 \hbar^2 / (2 m L^2) \equiv 1$ and $\hbar \omega \equiv 1$ to unity, and we assume that $\beta = 1$ for simplicity (that is, we use $k_B T$ as energy units). Note that the constant term appearing in \eqref{eq:energyValuesTraOsc} evaluates to $\sum_{k \in \Z} \sum_{q \in \N_0} \frac{1}{2} \, n_{k, q} = \frac{1}{2} \, n$ by construction of the number operator and its eigenvalues, hence we may ignore it in the remainder of the proof as it does not have any influence on the convergence of the entire expression. Thus, we have to investigate the series
    \begin{equation*}
        S \ce \sum_{k \in \Z} \sum_{q \in \N_0} \frac{k^2 + q}{\ee^{k^2 + q} \, \ee^{1/2 - \mu} \mp 1} \ .
    \end{equation*}
    
    For a fixed $k \in \Z$, introduce the new variable $r \ce q + k^2$, which takes the values $\set{k^2, k^2 + 1, k^2 + 2, \dotsc} \subset \N_0$. With this, $S$ can be rewritten as
    \begin{align}\label{eq:auxiliarySeriesS}
        \begin{split}
            S &= \sum_{k \in \Z} \, \sum_{r=k^2}^{\infty} \frac{r}{\ee^{r} \, \ee^{1/2 - \mu} \mp 1} \\
            &= 2 \sum_{k=1}^{\infty} \sum_{r=k^2}^{\infty} \frac{r}{\ee^{r} \, \ee^{1/2 - \mu} \mp 1} + \sum_{r=0}^{\infty} \frac{r}{\ee^{r} \, \ee^{1/2 - \mu} \mp 1} \ .
        \end{split}
    \end{align}
    Recall that from the series representation $\ee^{\xi} = \sum_{n \in \N_0} \xi^n / n!$, $\xi \in \R$, of the exponential function, one obtains $\ee^\xi \ge \xi^{m + 1} / (m + 1)!$ for all $\xi \ge 0$ and $m \in \N_0$, hence we may use $\ee^r \ge r^5 / 5!$ to estimate \eqref{eq:auxiliarySeriesS}. Setting $C \ce \ee^{1/2 - \mu}$, it immediately follows that in the fermionic (\enquote{$+$}) case,
    \begin{equation*}
        C \, \ee^r + 1 \ge \frac{C}{5!} \, r^5 \ .
    \end{equation*}
    For the bosonic (\enquote{$-$}) case, observe that the chemical potential is always strictly smaller than the ground state energy, see \cref{eq:existenceBoseGas}; in our choice of units, this means that $\mu < \frac{1}{2}$, and hence $C > 1$. Therefore,
    \begin{equation*}
        C \, \ee^r - 1 = C \sum_{n=1}^{\infty} \frac{r^n}{n!} + (C - 1) \ge \frac{C}{5!} \, r^5 \ .
    \end{equation*}
    Thus, in total, we have the following estimate:
    \begin{equation*}
        C \, \ee^r \mp 1 \ge \frac{C}{5!} \, r^5 \ .
    \end{equation*}
    With this, it follows that
    \begin{equation*}
        S \le \frac{240}{C} \sum_{k=1}^{\infty} \sum_{r=k^2}^{\infty} \frac{1}{r^4} + \frac{120}{C} \sum_{r=1}^{\infty} \frac{1}{r^4} \ .
    \end{equation*}
    The second series is well-known to converge: $\frac{5!}{C} \sum_{r=1}^{\infty} \frac{1}{r^4} = \frac{5!}{C} \, \zeta(4) = \frac{5!}{C} \, \frac{\pi^4}{90} = \frac{4 \pi^4}{3 C}$. The first term can be rewritten as follows:
    \begin{equation*}
        \wt{S} \ce \sum_{k=1}^{\infty} \sum_{r=k^2}^{\infty} \frac{1}{r^4} = \sum_{k=1}^{\infty} \frac{1}{3!} \, \psi^{(3)}(k^2) \ ,
    \end{equation*}
    where we have introduced the polygamma function $\psi^{(3)}$ \cite{Abramowitz1964}. This function satisfies the following inequality \cite{Qi2010}:
    \begin{equation*}
        \psi^{(3)}(k^2) \le \frac{2!}{\bigl(k^2\bigr)^3} + \frac{3!}{\bigl(k^2\bigr)^4} = \frac{2}{k^6} + \frac{6}{k^8} \ .
    \end{equation*}
    Therefore, $\wt{S}$ is bounded as follows:
    \begin{align*}
        \wt{S} &\le \frac{1}{6} \sum_{k=1}^{\infty} \left[\frac{2}{k^6} + \frac{6}{k^8}\right] \\
        &= \frac{1}{6} \Bigl(2 \, \zeta(6) + 6 \, \zeta(8)\Bigr) \\
        &= \frac{1}{3} \frac{\pi^6}{945} + \frac{\pi^8}{9450} \ .
    \end{align*}
    In conclusion, the series $S$ is bounded above by converging series; hence, it is finite and converges by the series comparison test,
    \begin{align*}
        S &\le \frac{240}{C} \, \wt{S} + \frac{4 \pi^4}{3 C} \\
        &\le \frac{1}{C} \left(\frac{4 \pi^4}{3} + \frac{16 \pi^6}{189} + \frac{8 \pi^8}{315}\right) < + \infty \ .\tag*{\qedhere}
    \end{align*}
\end{proof}

\begin{lemma}\label{lem:convergenceOsc}
    The series $\sum_{q \in \N_0} \hbar \omega \left(q + \frac{1}{2}\right) n_q$, describing the eigenvalue of the Hamiltonian \eqref{eq:secondQuantizedHamiltonian} in the occupation number state $\Phi_n(\dotsc, n_q, \dotsc)$ with $n_q$ given by \eqref{eq:occupationNumberDist} for $\varepsilon_k \equiv 0$, converges.
\end{lemma}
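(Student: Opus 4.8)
The plan is to recognize that this statement is exactly the special case $\varepsilon_k \equiv 0$ of \cref{lem:convergenceTraOsc}, so the estimates already established there carry over almost verbatim. Concretely, I would first adopt the same normalization as in the proof of that lemma, setting $\hbar \omega \equiv 1$ and $\beta = 1$, and discard the constant contribution $\sum_{q \in \N_0} \frac{1}{2} \, n_q = \frac{1}{2} \, n$, which is finite by the closure condition \eqref{eq:totalParticleNumber} and irrelevant for convergence. What remains to control is the single series
\begin{equation*}
    S \ce \sum_{q=0}^{\infty} \frac{q}{\ee^{q} \, \ee^{1/2 - \mu} \mp 1} \ .
\end{equation*}

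Next, I would observe that this series is precisely the $k = 0$ summand of the double series denoted $S$ in the proof of \cref{lem:convergenceTraOsc}: there the substitution $r \ce q + k^2$ reduces to $r = q$ when $k = 0$, so the inner sum over $r$ starting at $r = 0$ coincides term-by-term with the series above. Since all summands of that double series are non-negative and the whole double series was shown to converge, every inner series—in particular the one at $k = 0$—converges as well. This already yields the claim without any further computation.

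Alternatively, for a self-contained argument I would simply reuse the pointwise bound from the previous proof. Writing $C \ce \ee^{1/2 - \mu}$, the bosonic existence condition $\mu < \frac{1}{2}$ (equivalently $C > 1$), which follows from \cref{eq:existenceBoseGas} together with the fact that the ground-state energy is $\frac{1}{2} \, \hbar \omega$, combined with the elementary inequality $\ee^{q} \ge q^5 / 5!$, gives $C \, \ee^{q} \mp 1 \ge \frac{C}{5!} \, q^5$ in both the bosonic and the fermionic case. Consequently,
\begin{equation*}
    S \le \frac{5!}{C} \sum_{q=1}^{\infty} \frac{q}{q^5} = \frac{5!}{C} \, \zeta(4) = \frac{5!}{C} \, \frac{\pi^4}{90} = \frac{4 \pi^4}{3 C} < + \infty \ ,
\end{equation*}
and convergence follows by the series comparison test.

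I do not expect any genuine obstacle here, since the statement is strictly weaker than \cref{lem:convergenceTraOsc}; indeed the harder polygamma estimate needed there to handle the sum over $k$ is entirely absent once $\varepsilon_k \equiv 0$. The only point requiring care—already handled above—is the bosonic case, where the lower bound on $C \, \ee^{q} - 1$ relies on $C > 1$; this is exactly where the physical constraint $\mu < \frac{1}{2} \, \hbar \omega$ on the chemical potential enters, just as in the translational-vibrational model.
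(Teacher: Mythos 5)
Your proposal is correct and matches the paper's own (one-line) proof, which likewise reduces the claim to the convergence of the second series in \cref{eq:auxiliarySeriesS}, i.e., the $k=0$ contribution bounded via $C\,\ee^{q} \mp 1 \ge \frac{C}{5!}\,q^5$ and comparison with $\zeta(4)$. No differences worth noting.
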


\begin{proof}
    The argument is essentially that provided in the previous proof for the convergence of the second series appearing in \cref{eq:auxiliarySeriesS}.
\end{proof}

\begin{corollary}\label{cor:effectiveEnergies}
    The series appearing in the effective energies \eqref{eq:energySpectrumVibrations} and \eqref{eq:energySpectrumTranslationsVibrations} converge; hence, they may be combined to a single series, that is, we may write
    \begin{equation*}
        E_n^\mathrm{eff} = \sum_{q=0}^{\infty} \left[\hbar \omega \left(q + \frac{1}{2}\right) - \mu\right] n_{q} \ ,
    \end{equation*}
    respectively,
    \begin{equation*}
        E_n^\mathrm{eff} = \sum_{k \in \Z} \sum_{q \in \N_0} \left[\varepsilon_k + \hbar \omega \left(q + \frac{1}{2}\right) - \mu\right] n_{k, q} \ .
    \end{equation*}
\end{corollary}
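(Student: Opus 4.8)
The plan is to recognize that each effective energy in \eqref{eq:energySpectrumVibrations} and \eqref{eq:energySpectrumTranslationsVibrations} is written as a \emph{difference} of two series: the ``energy series'' $\sum_{k \in \Z} \sum_{q \in \N_0} [\varepsilon_k + \hbar \omega (q + \frac{1}{2})] n_{k, q}$ (respectively its $\varepsilon_k \equiv 0$ vibrational analogue) and the ``particle-number series'' $\mu \sum_{k \in \Z} \sum_{q \in \N_0} n_{k, q}$. Convergence of the energy series has already been secured in \cref{lem:convergenceOsc,lem:convergenceTraOsc}. It therefore only remains to establish convergence of the particle-number series, after which the two may be recombined into a single series by the elementary algebra of convergent series.

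First I would establish convergence of the number series $\sum_{k \in \Z} \sum_{q \in \N_0} n_{k, q}$. Since every energy factor obeys the uniform lower bound $\varepsilon_k + \hbar \omega (q + \frac{1}{2}) \ge \frac{1}{2} \hbar \omega > 0$ and the occupation numbers are non-negative, $n_{k, q} \ge 0$, one has the termwise domination
\begin{equation*}
    \frac{1}{2} \hbar \omega \sum_{k \in \Z} \sum_{q \in \N_0} n_{k, q} \le \sum_{k \in \Z} \sum_{q \in \N_0} \left[\varepsilon_k + \hbar \omega \left(q + \frac{1}{2}\right)\right] n_{k, q} < + \infty \ ,
\end{equation*}
the finiteness of the right-hand side being precisely \cref{lem:convergenceTraOsc}. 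Hence $\sum_{k, q} n_{k, q} < + \infty$; in fact its value is the (mean) particle number, consistent with the closure condition \eqref{eq:totalParticleNumber}. The purely vibrational case is identical, using \cref{lem:convergenceOsc} in place of \cref{lem:convergenceTraOsc}.

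With both series convergent, the recombination is routine. The summands of the energy series and those of the number series $\sum_{k, q} n_{k, q}$ are non-negative, so both double series converge absolutely; multiplying the latter by the constant $\mu$ preserves absolute convergence. By the standard result on double series of non-negative terms (Tonelli's theorem for counting measure), the iterated sums are independent of the order of summation, and absolutely convergent series may be subtracted term by term. Subtracting $\mu \sum_{k, q} n_{k, q}$ from the energy series termwise thus yields the single series $\sum_{k, q} [\varepsilon_k + \hbar \omega (q + \frac{1}{2}) - \mu] n_{k, q}$, and likewise in the vibrational case. I expect the only genuine subtlety to be the justification of the term-by-term combination of \emph{double} series; this is dispatched by the non-negativity of the individual summands, which makes absolute convergence automatic and the rearrangement legitimate, so that all the real analytic content is confined to the two preceding lemmas.
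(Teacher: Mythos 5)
Your proof is correct and follows essentially the same route as the paper's: convergence of the energy series via \cref{lem:convergenceOsc,lem:convergenceTraOsc}, finiteness of the particle-number series, and term-by-term combination of convergent series. The only difference is that the paper simply asserts that $\sum_{k,q} n_{k,q}$ is finite (equal to $n$) by the closure condition, whereas you derive its finiteness by comparison with the energy series using $\varepsilon_k + \hbar\omega\bigl(q+\tfrac{1}{2}\bigr) \ge \tfrac{1}{2}\hbar\omega > 0$ and $n_{k,q} \ge 0$ --- a slightly more self-contained justification of that one step.
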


\begin{proof}
    Use \cref{lem:convergenceOsc}, respectively, \cref{lem:convergenceTraOsc}, together with the fact that the series $\sum_{q \in \N_0} n_q$, respectively, $\sum_{k \in \Z} \sum_{q \in \N_0} n_{k, q}$, is finite (and evaluates to the particular number $n$), the two converging series may be added together term by term.
\end{proof}

\bibliography{harm.bib}

%apsrev4-2.bst 2019-01-14 (MD) hand-edited version of apsrev4-1.bst
%Control: key (0)
%Control: author (8) initials jnrlst
%Control: editor formatted (1) identically to author
%Control: production of article title (0) allowed
%Control: page (0) single
%Control: year (1) truncated
%Control: production of eprint (0) enabled
\begin{thebibliography}{48}%
\makeatletter
\providecommand \@ifxundefined [1]{%
 \@ifx{#1\undefined}
}%
\providecommand \@ifnum [1]{%
 \ifnum #1\expandafter \@firstoftwo
 \else \expandafter \@secondoftwo
 \fi
}%
\providecommand \@ifx [1]{%
 \ifx #1\expandafter \@firstoftwo
 \else \expandafter \@secondoftwo
 \fi
}%
\providecommand \natexlab [1]{#1}%
\providecommand \enquote  [1]{``#1''}%
\providecommand \bibnamefont  [1]{#1}%
\providecommand \bibfnamefont [1]{#1}%
\providecommand \citenamefont [1]{#1}%
\providecommand \href@noop [0]{\@secondoftwo}%
\providecommand \href [0]{\begingroup \@sanitize@url \@href}%
\providecommand \@href[1]{\@@startlink{#1}\@@href}%
\providecommand \@@href[1]{\endgroup#1\@@endlink}%
\providecommand \@sanitize@url [0]{\catcode `\\12\catcode `\$12\catcode `\&12\catcode `\#12\catcode `\^12\catcode `\_12\catcode `\%12\relax}%
\providecommand \@@startlink[1]{}%
\providecommand \@@endlink[0]{}%
\providecommand \url  [0]{\begingroup\@sanitize@url \@url }%
\providecommand \@url [1]{\endgroup\@href {#1}{\urlprefix }}%
\providecommand \urlprefix  [0]{URL }%
\providecommand \Eprint [0]{\href }%
\providecommand \doibase [0]{https://doi.org/}%
\providecommand \selectlanguage [0]{\@gobble}%
\providecommand \bibinfo  [0]{\@secondoftwo}%
\providecommand \bibfield  [0]{\@secondoftwo}%
\providecommand \translation [1]{[#1]}%
\providecommand \BibitemOpen [0]{}%
\providecommand \bibitemStop [0]{}%
\providecommand \bibitemNoStop [0]{.\EOS\space}%
\providecommand \EOS [0]{\spacefactor3000\relax}%
\providecommand \BibitemShut  [1]{\csname bibitem#1\endcsname}%
\let\auto@bib@innerbib\@empty
%</preamble>
\bibitem [{\citenamefont {Delle~Site}\ and\ \citenamefont {Djurdjevac}(2024)}]{ana}%
  \BibitemOpen
  \bibfield  {author} {\bibinfo {author} {\bibfnamefont {L.}~\bibnamefont {Delle~Site}}\ and\ \bibinfo {author} {\bibfnamefont {A.}~\bibnamefont {Djurdjevac}},\ }\bibfield  {title} {\bibinfo {title} {An effective hamiltonian for the simulation of open quantum molecular systems},\ }\href {https://doi.org/10.1088/1751-8121/ad5088} {\bibfield  {journal} {\bibinfo  {journal} {J. Phys. A: Math. Theor.}\ }\textbf {\bibinfo {volume} {57}},\ \bibinfo {pages} {255002} (\bibinfo {year} {2024})}\BibitemShut {NoStop}%
\bibitem [{\citenamefont {Zhang}\ \emph {et~al.}(2000)\citenamefont {Zhang}, \citenamefont {Isbister},\ and\ \citenamefont {Evans}}]{TTT16}%
  \BibitemOpen
  \bibfield  {author} {\bibinfo {author} {\bibfnamefont {F.}~\bibnamefont {Zhang}}, \bibinfo {author} {\bibfnamefont {D.~J.}\ \bibnamefont {Isbister}},\ and\ \bibinfo {author} {\bibfnamefont {D.~J.}\ \bibnamefont {Evans}},\ }\bibfield  {title} {\bibinfo {title} {Nonequilibrium molecular dynamics simulations of heat flow in one-dimensional lattices},\ }\href {https://doi.org/10.1103/PhysRevE.61.3541} {\bibfield  {journal} {\bibinfo  {journal} {Phys. Rev. E}\ }\textbf {\bibinfo {volume} {61}},\ \bibinfo {pages} {3541} (\bibinfo {year} {2000})}\BibitemShut {NoStop}%
\bibitem [{\citenamefont {Liu}\ \emph {et~al.}(2007)\citenamefont {Liu}, \citenamefont {Chen}, \citenamefont {Nie},\ and\ \citenamefont {Robbins}}]{TTT17}%
  \BibitemOpen
  \bibfield  {author} {\bibinfo {author} {\bibfnamefont {J.}~\bibnamefont {Liu}}, \bibinfo {author} {\bibfnamefont {S.}~\bibnamefont {Chen}}, \bibinfo {author} {\bibfnamefont {X.}~\bibnamefont {Nie}},\ and\ \bibinfo {author} {\bibfnamefont {M.~O.}\ \bibnamefont {Robbins}},\ }\bibfield  {title} {\bibinfo {title} {A continuum-atomistic simulation of heat transfer in micro-and nano-flows},\ }\href {https://doi.org/10.1016/j.jcp.2007.07.014} {\bibfield  {journal} {\bibinfo  {journal} {J. Comp. Phys}\ }\textbf {\bibinfo {volume} {227}},\ \bibinfo {pages} {279} (\bibinfo {year} {2007})}\BibitemShut {NoStop}%
\bibitem [{\citenamefont {Ditte}\ \emph {et~al.}(2023)\citenamefont {Ditte}, \citenamefont {Barborini}, \citenamefont {Medrano~Sandonas},\ and\ \citenamefont {Tkatchenko}}]{alex}%
  \BibitemOpen
  \bibfield  {author} {\bibinfo {author} {\bibfnamefont {M.}~\bibnamefont {Ditte}}, \bibinfo {author} {\bibfnamefont {M.}~\bibnamefont {Barborini}}, \bibinfo {author} {\bibfnamefont {L.}~\bibnamefont {Medrano~Sandonas}},\ and\ \bibinfo {author} {\bibfnamefont {A.}~\bibnamefont {Tkatchenko}},\ }\bibfield  {title} {\bibinfo {title} {Molecules in environments: toward systematic quantum embedding of electrons and drude oscillators},\ }\href {https://doi.org/10.1103/PhysRevLett.131.228001} {\bibfield  {journal} {\bibinfo  {journal} {Phys. Rev. Lett.}\ }\textbf {\bibinfo {volume} {131}},\ \bibinfo {pages} {228001} (\bibinfo {year} {2023})}\BibitemShut {NoStop}%
\bibitem [{\citenamefont {Xu}(2016)}]{TTT18}%
  \BibitemOpen
  \bibfield  {author} {\bibinfo {author} {\bibfnamefont {Z.}~\bibnamefont {Xu}},\ }\bibfield  {title} {\bibinfo {title} {Heat transport in low-dimensional materials: A review and perspective},\ }\href {https://doi.org/10.1016/j.taml.2016.04.002} {\bibfield  {journal} {\bibinfo  {journal} {Theor. Appl. Mech. Lett.}\ }\textbf {\bibinfo {volume} {6}},\ \bibinfo {pages} {113} (\bibinfo {year} {2016})}\BibitemShut {NoStop}%
\bibitem [{\citenamefont {Terrén~Alonso}\ \emph {et~al.}(2022)\citenamefont {Terrén~Alonso}, \citenamefont {Abiuso}, \citenamefont {Perarnau-Lloblet},\ and\ \citenamefont {Arrachea}}]{TTT19}%
  \BibitemOpen
  \bibfield  {author} {\bibinfo {author} {\bibfnamefont {P.}~\bibnamefont {Terrén~Alonso}}, \bibinfo {author} {\bibfnamefont {P.}~\bibnamefont {Abiuso}}, \bibinfo {author} {\bibfnamefont {M.}~\bibnamefont {Perarnau-Lloblet}},\ and\ \bibinfo {author} {\bibfnamefont {L.}~\bibnamefont {Arrachea}},\ }\bibfield  {title} {\bibinfo {title} {Geometric optimization of nonequilibrium adiabatic thermal machines and implementation in a qubit system},\ }\href {https://doi.org/10.1103/PRXQuantum.3.010326} {\bibfield  {journal} {\bibinfo  {journal} {PRX Quantum}\ }\textbf {\bibinfo {volume} {3}},\ \bibinfo {pages} {010326} (\bibinfo {year} {2022})}\BibitemShut {NoStop}%
\bibitem [{\citenamefont {Gálvez-Viruet}\ and\ \citenamefont {Llanes-Estrada}(2024)}]{secquantpap}%
  \BibitemOpen
  \bibfield  {author} {\bibinfo {author} {\bibfnamefont {J.~J.}\ \bibnamefont {Gálvez-Viruet}}\ and\ \bibinfo {author} {\bibfnamefont {F.~J.}\ \bibnamefont {Llanes-Estrada}},\ }\bibfield  {title} {\bibinfo {title} {Dynamical implementation of canonical second quantization on a quantum computer},\ }\href {https://doi.org/10.1103/PhysRevD.110.116018} {\bibfield  {journal} {\bibinfo  {journal} {Phys. Rev. D}\ }\textbf {\bibinfo {volume} {110}},\ \bibinfo {pages} {116018} (\bibinfo {year} {2024})}\BibitemShut {NoStop}%
\bibitem [{\citenamefont {Delle~Site}(2018)}]{advl}%
  \BibitemOpen
  \bibfield  {author} {\bibinfo {author} {\bibfnamefont {L.}~\bibnamefont {Delle~Site}},\ }\bibfield  {title} {\bibinfo {title} {Simulation of many-electron systems that exchange matter with the environment},\ }\href {https://doi.org/10.1002/adts.201800056} {\bibfield  {journal} {\bibinfo  {journal} {Adv. Theory Simul.}\ }\textbf {\bibinfo {volume} {1}},\ \bibinfo {pages} {1800056} (\bibinfo {year} {2018})}\BibitemShut {NoStop}%
\bibitem [{\citenamefont {Daley}(2014)}]{advphys}%
  \BibitemOpen
  \bibfield  {author} {\bibinfo {author} {\bibfnamefont {A.}~\bibnamefont {Daley}},\ }\bibfield  {title} {\bibinfo {title} {Quantum trajectories and open many-body quantum systems},\ }\href {https://doi.org/10.1080/00018732.2014.933502} {\bibfield  {journal} {\bibinfo  {journal} {Adv. Phys.}\ }\textbf {\bibinfo {volume} {63}},\ \bibinfo {pages} {77} (\bibinfo {year} {2014})}\BibitemShut {NoStop}%
\bibitem [{\citenamefont {Du}\ \emph {et~al.}(2017)\citenamefont {Du}, \citenamefont {Van~der Sar}, \citenamefont {Zhou}, \citenamefont {Upadhya}, \citenamefont {Casola}, \citenamefont {Zhang}, \citenamefont {Onbasli}, \citenamefont {Ross}, \citenamefont {Walsworth}, \citenamefont {Tserkovnyak},\ and\ \citenamefont {Yacobi}}]{science}%
  \BibitemOpen
  \bibfield  {author} {\bibinfo {author} {\bibfnamefont {C.}~\bibnamefont {Du}}, \bibinfo {author} {\bibfnamefont {T.}~\bibnamefont {Van~der Sar}}, \bibinfo {author} {\bibfnamefont {T.~X.}\ \bibnamefont {Zhou}}, \bibinfo {author} {\bibfnamefont {P.}~\bibnamefont {Upadhya}}, \bibinfo {author} {\bibfnamefont {F.}~\bibnamefont {Casola}}, \bibinfo {author} {\bibfnamefont {H.}~\bibnamefont {Zhang}}, \bibinfo {author} {\bibfnamefont {M.~C.}\ \bibnamefont {Onbasli}}, \bibinfo {author} {\bibfnamefont {C.~A.}\ \bibnamefont {Ross}}, \bibinfo {author} {\bibfnamefont {R.~L.}\ \bibnamefont {Walsworth}}, \bibinfo {author} {\bibfnamefont {Y.}~\bibnamefont {Tserkovnyak}},\ and\ \bibinfo {author} {\bibfnamefont {A.}~\bibnamefont {Yacobi}},\ }\bibfield  {title} {\bibinfo {title} {Control and local measurement of the spin chemical potential in a magnetic insulator},\ }\href {https://doi.org/10.1126/science.aak9611} {\bibfield  {journal} {\bibinfo  {journal} {Science}\ }\textbf {\bibinfo {volume} {357}},\ \bibinfo {pages} {195}
  (\bibinfo {year} {2017})}\BibitemShut {NoStop}%
\bibitem [{\citenamefont {Zhu}\ \emph {et~al.}(2019)\citenamefont {Zhu}, \citenamefont {Fiorino}, \citenamefont {Thompson}, \citenamefont {Mittapally}, \citenamefont {Meyhofer},\ and\ \citenamefont {Reddy}}]{nature}%
  \BibitemOpen
  \bibfield  {author} {\bibinfo {author} {\bibfnamefont {L.}~\bibnamefont {Zhu}}, \bibinfo {author} {\bibfnamefont {A.}~\bibnamefont {Fiorino}}, \bibinfo {author} {\bibfnamefont {D.}~\bibnamefont {Thompson}}, \bibinfo {author} {\bibfnamefont {R.}~\bibnamefont {Mittapally}}, \bibinfo {author} {\bibfnamefont {E.}~\bibnamefont {Meyhofer}},\ and\ \bibinfo {author} {\bibfnamefont {P.}~\bibnamefont {Reddy}},\ }\bibfield  {title} {\bibinfo {title} {Near-field photonic cooling through control of the chemical potential of photons},\ }\href {https://doi.org/10.1038/s41586-019-0918-8} {\bibfield  {journal} {\bibinfo  {journal} {Nature}\ }\textbf {\bibinfo {volume} {566}},\ \bibinfo {pages} {239} (\bibinfo {year} {2019})}\BibitemShut {NoStop}%
\bibitem [{\citenamefont {Zou}\ and\ \citenamefont {Sun}(2025)}]{decodeph}%
  \BibitemOpen
  \bibfield  {author} {\bibinfo {author} {\bibfnamefont {C.-L.}\ \bibnamefont {Zou}}\ and\ \bibinfo {author} {\bibfnamefont {L.}~\bibnamefont {Sun}},\ }\bibfield  {title} {\bibinfo {title} {Better qubits through phononic engineering},\ }\bibfield  {journal} {\bibinfo  {journal} {Nat. Phys.}\ }\href {https://doi.org/10.1038/s41567-024-02775-8} {10.1038/s41567-024-02775-8} (\bibinfo {year} {2025})\BibitemShut {NoStop}%
\bibitem [{\citenamefont {Zivari}\ \emph {et~al.}(2022)\citenamefont {Zivari}, \citenamefont {Fiaschi}, \citenamefont {Burgwal}, \citenamefont {Verhagen}, \citenamefont {Stockill},\ and\ \citenamefont {Gr\"{o}blacher}}]{traveling}%
  \BibitemOpen
  \bibfield  {author} {\bibinfo {author} {\bibfnamefont {A.}~\bibnamefont {Zivari}}, \bibinfo {author} {\bibfnamefont {N.}~\bibnamefont {Fiaschi}}, \bibinfo {author} {\bibfnamefont {R.}~\bibnamefont {Burgwal}}, \bibinfo {author} {\bibfnamefont {E.}~\bibnamefont {Verhagen}}, \bibinfo {author} {\bibfnamefont {R.}~\bibnamefont {Stockill}},\ and\ \bibinfo {author} {\bibfnamefont {S.}~\bibnamefont {Gr\"{o}blacher}},\ }\bibfield  {title} {\bibinfo {title} {On-chip distribution of quantum information using traveling phonons},\ }\href {https://doi.org/10.1126/sciadv.add2811} {\bibfield  {journal} {\bibinfo  {journal} {Science Adv.}\ }\textbf {\bibinfo {volume} {8}},\ \bibinfo {pages} {eadd2811} (\bibinfo {year} {2022})}\BibitemShut {NoStop}%
\bibitem [{\citenamefont {Wei}\ \emph {et~al.}(2022)\citenamefont {Wei}, \citenamefont {Verstraelen}, \citenamefont {Orfanakis}, \citenamefont {Ruseckas}, \citenamefont {Liew}, \citenamefont {Samuel}, \citenamefont {Turnbull},\ and\ \citenamefont {Ohadi}}]{expharm}%
  \BibitemOpen
  \bibfield  {author} {\bibinfo {author} {\bibfnamefont {M.}~\bibnamefont {Wei}}, \bibinfo {author} {\bibfnamefont {W.}~\bibnamefont {Verstraelen}}, \bibinfo {author} {\bibfnamefont {K.}~\bibnamefont {Orfanakis}}, \bibinfo {author} {\bibfnamefont {A.}~\bibnamefont {Ruseckas}}, \bibinfo {author} {\bibfnamefont {T.~C.~H.}\ \bibnamefont {Liew}}, \bibinfo {author} {\bibfnamefont {I.~D.~W.}\ \bibnamefont {Samuel}}, \bibinfo {author} {\bibfnamefont {G.~A.}\ \bibnamefont {Turnbull}},\ and\ \bibinfo {author} {\bibfnamefont {H.}~\bibnamefont {Ohadi}},\ }\bibfield  {title} {\bibinfo {title} {Optically trapped room temperature polariton condensate in an organic semiconductor},\ }\href {https://doi.org/10.1038/s41467-022-34440-0} {\bibfield  {journal} {\bibinfo  {journal} {Nat. Commun.}\ }\textbf {\bibinfo {volume} {13}},\ \bibinfo {pages} {7191} (\bibinfo {year} {2022})}\BibitemShut {NoStop}%
\bibitem [{\citenamefont {Chen}\ \emph {et~al.}(2016)\citenamefont {Chen}, \citenamefont {Cohen}, \citenamefont {Millis},\ and\ \citenamefont {Reichman}}]{phon-el}%
  \BibitemOpen
  \bibfield  {author} {\bibinfo {author} {\bibfnamefont {H.-T.}\ \bibnamefont {Chen}}, \bibinfo {author} {\bibfnamefont {G.}~\bibnamefont {Cohen}}, \bibinfo {author} {\bibfnamefont {A.~J.}\ \bibnamefont {Millis}},\ and\ \bibinfo {author} {\bibfnamefont {D.~R.}\ \bibnamefont {Reichman}},\ }\bibfield  {title} {\bibinfo {title} {Anderson-holstein model in two flavors of the noncrossing approximation},\ }\href {https://doi.org/10.1103/PhysRevB.93.174309} {\bibfield  {journal} {\bibinfo  {journal} {Phys. Rev. B}\ }\textbf {\bibinfo {volume} {93}},\ \bibinfo {pages} {174309} (\bibinfo {year} {2016})}\BibitemShut {NoStop}%
\bibitem [{\citenamefont {Katz}\ and\ \citenamefont {Monroe}(2023)}]{mediated}%
  \BibitemOpen
  \bibfield  {author} {\bibinfo {author} {\bibfnamefont {O.}~\bibnamefont {Katz}}\ and\ \bibinfo {author} {\bibfnamefont {C.}~\bibnamefont {Monroe}},\ }\bibfield  {title} {\bibinfo {title} {Programmable quantum simulations of bosonic systems with trapped ions},\ }\href {https://doi.org/10.1103/PhysRevLett.131.033604} {\bibfield  {journal} {\bibinfo  {journal} {Phys. Rev. Lett.}\ }\textbf {\bibinfo {volume} {131}},\ \bibinfo {pages} {033604} (\bibinfo {year} {2023})}\BibitemShut {NoStop}%
\bibitem [{\citenamefont {Huang}(1987)}]{huang}%
  \BibitemOpen
  \bibfield  {author} {\bibinfo {author} {\bibfnamefont {K.}~\bibnamefont {Huang}},\ }\href@noop {} {\emph {\bibinfo {title} {Statistical Mechanics}}},\ \bibinfo {edition} {2nd}\ ed.\ (\bibinfo  {publisher} {John Wiley \& Sons, New York},\ \bibinfo {year} {1987})\BibitemShut {NoStop}%
\bibitem [{\citenamefont {Mazenko}(2000)}]{mazenko}%
  \BibitemOpen
  \bibfield  {author} {\bibinfo {author} {\bibfnamefont {G.~F.}\ \bibnamefont {Mazenko}},\ }\href@noop {} {\emph {\bibinfo {title} {Equilibrium Statistical Mechanics}}}\ (\bibinfo  {publisher} {John Wiley \& Sons, New York},\ \bibinfo {year} {2000})\BibitemShut {NoStop}%
\bibitem [{\citenamefont {Bogoliubov}(1958)}]{bogo}%
  \BibitemOpen
  \bibfield  {author} {\bibinfo {author} {\bibfnamefont {N.~N.}\ \bibnamefont {Bogoliubov}},\ }\bibfield  {title} {\bibinfo {title} {On a new method in the theory of superconductivity},\ }\href {https://doi.org/10.1007/BF02745585} {\bibfield  {journal} {\bibinfo  {journal} {Il Nuovo Cimento}\ }\textbf {\bibinfo {volume} {7}},\ \bibinfo {pages} {794} (\bibinfo {year} {1958})}\BibitemShut {NoStop}%
\bibitem [{\citenamefont {Bruus}\ and\ \citenamefont {Flensberg}(2004)}]{librobogo}%
  \BibitemOpen
  \bibfield  {author} {\bibinfo {author} {\bibfnamefont {H.}~\bibnamefont {Bruus}}\ and\ \bibinfo {author} {\bibfnamefont {K.}~\bibnamefont {Flensberg}},\ }\href {https://doi.org/10.1093/oso/9780198566335.001.0001} {\emph {\bibinfo {title} {Many-Body Quantum Theory in Condensed Matter Physics}}}\ (\bibinfo  {publisher} {Oxford University Press, Oxford},\ \bibinfo {year} {2004})\BibitemShut {NoStop}%
\bibitem [{\citenamefont {Sakuari}\ and\ \citenamefont {Napolitano}(2020)}]{saku}%
  \BibitemOpen
  \bibfield  {author} {\bibinfo {author} {\bibfnamefont {J.~J.}\ \bibnamefont {Sakuari}}\ and\ \bibinfo {author} {\bibfnamefont {J.}~\bibnamefont {Napolitano}},\ }\href@noop {} {\emph {\bibinfo {title} {Modern Quantum Mechanics}}},\ \bibinfo {edition} {3rd}\ ed.\ (\bibinfo  {publisher} {Cambridge University Press, Cambridge},\ \bibinfo {year} {2020})\BibitemShut {NoStop}%
\bibitem [{\citenamefont {Gustafson}\ and\ \citenamefont {Sigal}(2020)}]{GustafsonSigal2020}%
  \BibitemOpen
  \bibfield  {author} {\bibinfo {author} {\bibfnamefont {S.~J.}\ \bibnamefont {Gustafson}}\ and\ \bibinfo {author} {\bibfnamefont {I.~M.}\ \bibnamefont {Sigal}},\ }\href {https://doi.org/10.1007/978-3-030-59562-3} {\emph {\bibinfo {title} {Mathematical Concepts of Quantum Mechanics}}},\ \bibinfo {edition} {3rd}\ ed.,\ Universitext\ (\bibinfo  {publisher} {Springer, Cham},\ \bibinfo {year} {2020})\BibitemShut {NoStop}%
\bibitem [{\citenamefont {Moretti}(2017)}]{Moretti2018}%
  \BibitemOpen
  \bibfield  {author} {\bibinfo {author} {\bibfnamefont {V.}~\bibnamefont {Moretti}},\ }\href {https://doi.org/10.1007/978-3-319-70706-8} {\emph {\bibinfo {title} {Spectral Theory and Quantum Mechanics}}},\ \bibinfo {edition} {2nd}\ ed.,\ \bibinfo {series} {Unitext}\ No.\ \bibinfo {number} {110}\ (\bibinfo  {publisher} {Springer, Cham},\ \bibinfo {year} {2017})\BibitemShut {NoStop}%
\bibitem [{\citenamefont {Teschl}(2014)}]{Teschl2014}%
  \BibitemOpen
  \bibfield  {author} {\bibinfo {author} {\bibfnamefont {G.}~\bibnamefont {Teschl}},\ }\href@noop {} {\emph {\bibinfo {title} {Mathematical Methods in Quantum Mechanics}}},\ \bibinfo {edition} {2nd}\ ed.,\ \bibinfo {series} {Graduate Studies in Mathematics}\ No.\ \bibinfo {number} {157}\ (\bibinfo  {publisher} {American Mathematical Society, Providence},\ \bibinfo {year} {2014})\BibitemShut {NoStop}%
\bibitem [{\citenamefont {Zeidler}(1995)}]{Zeidler1995}%
  \BibitemOpen
  \bibfield  {author} {\bibinfo {author} {\bibfnamefont {E.}~\bibnamefont {Zeidler}},\ }\href {https://doi.org/10.1007/978-1-4612-0815-0} {\emph {\bibinfo {title} {Applied Functional Analysis. Applications to Mathematical Physics}}},\ \bibinfo {series} {Applied Mathematical Sciences}\ No.\ \bibinfo {number} {108}\ (\bibinfo  {publisher} {Springer, New York},\ \bibinfo {year} {1995})\BibitemShut {NoStop}%
\bibitem [{Note1()}]{Note1}%
  \BibitemOpen
  \bibinfo {note} {A suitable choice of domain for $H^\protect \mathrm {osc}$ is $\protect \dom (H^\protect \mathrm {osc}) = \protect \mathcal {S}(\protect \mathbb {R}^3)$, the space of Schwartz functions, whereby $H^\protect \mathrm {osc}$ becomes essentially self-adjoint \cite {Moretti2018, Teschl2014, Zeidler1995}. Taking the closure of this operator yields the self-adjoint extension $\protect \overline {H^\protect \mathrm {osc}}$. The interested reader can find an explicit mathematical description of this operator in \cite {Moretti2018} and \cite {Zeidler1995}.\label {ftn:domain}}\BibitemShut {NoStop}%
\bibitem [{\citenamefont {Schmüdgen}(2012)}]{Schmüdgen2012}%
  \BibitemOpen
  \bibfield  {author} {\bibinfo {author} {\bibfnamefont {K.}~\bibnamefont {Schmüdgen}},\ }\href {https://doi.org/10.1007/978-94-007-4753-1} {\emph {\bibinfo {title} {Unbounded Self-adjoint Operators on Hilbert Space}}},\ \bibinfo {series} {Graduate Text in Mathematics}\ No.\ \bibinfo {number} {265}\ (\bibinfo  {publisher} {Springer, Dordrecht},\ \bibinfo {year} {2012})\BibitemShut {NoStop}%
\bibitem [{\citenamefont {Reed}\ and\ \citenamefont {Simon}(1980)}]{RS1}%
  \BibitemOpen
  \bibfield  {author} {\bibinfo {author} {\bibfnamefont {M.}~\bibnamefont {Reed}}\ and\ \bibinfo {author} {\bibfnamefont {B.}~\bibnamefont {Simon}},\ }\href@noop {} {\emph {\bibinfo {title} {Methods of Modern Mathematical Physics. Volume I: Functional Analysis}}},\ \bibinfo {edition} {{R}evised and {E}nlarged}\ ed.\ (\bibinfo  {publisher} {Academic Press, San Diego, London},\ \bibinfo {year} {1980})\BibitemShut {NoStop}%
\bibitem [{Note2()}]{Note2}%
  \BibitemOpen
  \bibinfo {note} {A suitable domain for the creation and annihilation operators is again the space $\protect \mathcal {S}(\protect \mathbb {R})$ of Schwartz functions.}\BibitemShut {Stop}%
\bibitem [{Note3()}]{Note3}%
  \BibitemOpen
  \bibinfo {note} {The element $\psi _0 \in L^2(\protect \mathbb {R})$ is given for all $x \in \protect \mathbb {R}$ by $\psi _0(x) \mathrel {\mathop :}\mathrel {\mkern -1.2mu}=\pi ^{-1/4} \protect \, s^{-1/2} \protect \, \protect \mathrm {e}^{- x^2 / (2 s^2)}$, where $s \mathrel {\mathop :}\mathrel {\mkern -1.2mu}=\protect \sqrt {\hbar / (m \omega )}$, and for $q_\alpha \in \protect \mathbb {N}$ one defines $\psi _{q_\alpha } \mathrel {\mathop :}\mathrel {\mkern -1.2mu}=(q_\alpha !)^{-1/2} \protect \, (a_\alpha ^\dagger )^{q_\alpha } \protect \, \psi _0$; see Ref.~\cite {Moretti2018}.}\BibitemShut {Stop}%
\bibitem [{\citenamefont {Reed}\ and\ \citenamefont {Simon}(1975)}]{RS2}%
  \BibitemOpen
  \bibfield  {author} {\bibinfo {author} {\bibfnamefont {M.}~\bibnamefont {Reed}}\ and\ \bibinfo {author} {\bibfnamefont {B.}~\bibnamefont {Simon}},\ }\href@noop {} {\emph {\bibinfo {title} {Methods of Modern Mathematical Physics. Volume II: Fourier Analysis, Self-Adjointness}}}\ (\bibinfo  {publisher} {Academic Press, San Diego, London},\ \bibinfo {year} {1975})\BibitemShut {NoStop}%
\bibitem [{\citenamefont {Bratteli}\ and\ \citenamefont {Robinson}(1981)}]{BR2}%
  \BibitemOpen
  \bibfield  {author} {\bibinfo {author} {\bibfnamefont {O.}~\bibnamefont {Bratteli}}\ and\ \bibinfo {author} {\bibfnamefont {D.~W.}\ \bibnamefont {Robinson}},\ }\href {https://doi.org/10.1007/978-3-662-09089-3} {\emph {\bibinfo {title} {Operator Algebras and Quantum Statistical Mechanics II. Equilibrium States, Models in Quantum Statistical Mechanics}}},\ Theoretical and Mathematical Physics\ (\bibinfo  {publisher} {Springer, Berlin, Heidelberg},\ \bibinfo {year} {1981})\BibitemShut {NoStop}%
\bibitem [{\citenamefont {Coleman}(2015)}]{Coleman2015}%
  \BibitemOpen
  \bibfield  {author} {\bibinfo {author} {\bibfnamefont {P.}~\bibnamefont {Coleman}},\ }\href {https://doi.org/10.1017/CBO9781139020916} {\emph {\bibinfo {title} {Introduction to Many-Body Physics}}}\ (\bibinfo  {publisher} {Cambridge University Press, Cambridge},\ \bibinfo {year} {2015})\BibitemShut {NoStop}%
\bibitem [{\citenamefont {Strocchi}(2013)}]{Strocchi2013}%
  \BibitemOpen
  \bibfield  {author} {\bibinfo {author} {\bibfnamefont {F.}~\bibnamefont {Strocchi}},\ }\href {https://doi.org/10.1093/acprof:oso/9780199671571.001.0001} {\emph {\bibinfo {title} {An Introduction to Non-Perturbative Foundations of Quantum Field Theory}}},\ \bibinfo {series} {International Series of Monographs on Physics}\ No.\ \bibinfo {number} {158}\ (\bibinfo  {publisher} {Oxford University Press, Oxford},\ \bibinfo {year} {2013})\BibitemShut {NoStop}%
\bibitem [{Note4()}]{Note4}%
  \BibitemOpen
  \bibinfo {note} {A suitable domain for these operators is the set of all $(\Phi _M)_{M \in \protect \mathbb {N}_0} \in \protect \mathcal {F}$ which satisfy $\DOTSB \sum@ \slimits@ _{M=0}^{\infty } M \protect \norm {\Phi _M}_M^2 < + \infty $, see Ref.~\cite {BR2}.}\BibitemShut {Stop}%
\bibitem [{Note5()}]{Note5}%
  \BibitemOpen
  \bibinfo {note} {The domain of this operator is given by $\protect \set [\protect \big ]{(\Phi _M)_{M \in \protect \mathbb {N}_0} \in \protect \mathcal {F}\protect \, : \protect \, \DOTSB \sum@ \slimits@ _{M=0}^{\infty } M^2 \protect \norm {\Phi _M}_M^2 < + \infty }$, see Refs.~\cite {Zeidler1995} and \cite {BR2}.}\BibitemShut {Stop}%
\bibitem [{Note6()}]{Note6}%
  \BibitemOpen
  \bibinfo {note} {From a mathematical standpoint, it is clear that for bosons the Hamiltonian \protect \eqref {eq:secondQuantizedHamiltonian} of the oscillators is positive. The additional term $- \mu \protect \mathcal {N}$ in the effective Hamiltonian models a flux of particles between the system and the reservoir. Therefore, from a physical standpoint, one can expect that the energy remains positive because adding or removing particles cannot create negative energy states if particles are not interacting in any form.}\BibitemShut {Stop}%
\bibitem [{Note7()}]{Note7}%
  \BibitemOpen
  \bibinfo {note} {In this context of open systems, we mean by ``bound states'' those that have an energy below the chemical potential, implying that they cannot migrate to the reservoir (see also \protect \cref {subsec:fermions}).}\BibitemShut {Stop}%
\bibitem [{\citenamefont {Woit}(2017)}]{Woit2017}%
  \BibitemOpen
  \bibfield  {author} {\bibinfo {author} {\bibfnamefont {P.}~\bibnamefont {Woit}},\ }\href {https://doi.org/10.1007/978-3-319-64612-1} {\emph {\bibinfo {title} {Quantum Theory, Groups and Representations}}}\ (\bibinfo  {publisher} {Springer, Cham},\ \bibinfo {year} {2017})\BibitemShut {NoStop}%
\bibitem [{\citenamefont {Kladnik}\ \emph {et~al.}(2025)\citenamefont {Kladnik}, \citenamefont {Schio}, \citenamefont {Bavdek} \emph {et~al.}}]{radicals}%
  \BibitemOpen
  \bibfield  {author} {\bibinfo {author} {\bibfnamefont {G.}~\bibnamefont {Kladnik}}, \bibinfo {author} {\bibfnamefont {L.}~\bibnamefont {Schio}}, \bibinfo {author} {\bibfnamefont {G.}~\bibnamefont {Bavdek}}, \emph {et~al.},\ }\bibfield  {title} {\bibinfo {title} {Engineering 2d spin networks by on-surface encapsulation of azafullerene radicals in nanotemplates},\ }\href {https://doi.org/10.1038/s41467-024-55521-2} {\bibfield  {journal} {\bibinfo  {journal} {Nat. Commun.}\ }\textbf {\bibinfo {volume} {16}},\ \bibinfo {pages} {193} (\bibinfo {year} {2025})}\BibitemShut {NoStop}%
\bibitem [{\citenamefont {{de Oliveira}}(2009)}]{Oliveira2009}%
  \BibitemOpen
  \bibfield  {author} {\bibinfo {author} {\bibfnamefont {C.~R.}\ \bibnamefont {{de Oliveira}}},\ }\href {https://doi.org/10.1007/978-3-7643-8795-2} {\emph {\bibinfo {title} {Intermediate Spectral Theory and Quantum Dynamics}}},\ \bibinfo {series} {Progress in Mathematical Physics}\ No.~\bibinfo {number} {54}\ (\bibinfo  {publisher} {Birkhäuser, Basel},\ \bibinfo {year} {2009})\BibitemShut {NoStop}%
\bibitem [{\citenamefont {Plenio}\ \emph {et~al.}(2004)\citenamefont {Plenio}, \citenamefont {Hartley},\ and\ \citenamefont {Eisert}}]{eisert}%
  \BibitemOpen
  \bibfield  {author} {\bibinfo {author} {\bibfnamefont {M.~B.}\ \bibnamefont {Plenio}}, \bibinfo {author} {\bibfnamefont {J.}~\bibnamefont {Hartley}},\ and\ \bibinfo {author} {\bibfnamefont {J.}~\bibnamefont {Eisert}},\ }\bibfield  {title} {\bibinfo {title} {Dynamics and manipulation of entanglement in coupled harmonic systems with many degrees of freedom},\ }\href {https://doi.org/10.1088/1367-2630/6/1/036} {\bibfield  {journal} {\bibinfo  {journal} {New J. Phys.}\ }\textbf {\bibinfo {volume} {6}},\ \bibinfo {pages} {36} (\bibinfo {year} {2004})}\BibitemShut {NoStop}%
\bibitem [{\citenamefont {Bonsignori}\ and\ \citenamefont {Eisler}(2024)}]{riccarda}%
  \BibitemOpen
  \bibfield  {author} {\bibinfo {author} {\bibfnamefont {R.}~\bibnamefont {Bonsignori}}\ and\ \bibinfo {author} {\bibfnamefont {V.}~\bibnamefont {Eisler}},\ }\bibfield  {title} {\bibinfo {title} {Entanglement hamiltonian for inhomogeneous free fermions},\ }\href {https://doi.org/10.1088/1751-8121/ad5501} {\bibfield  {journal} {\bibinfo  {journal} {J. Phys. A: Math. Theor.}\ }\textbf {\bibinfo {volume} {57}},\ \bibinfo {pages} {275001} (\bibinfo {year} {2024})}\BibitemShut {NoStop}%
\bibitem [{\citenamefont {Ebrahimi~Viand}\ \emph {et~al.}(2020)\citenamefont {Ebrahimi~Viand}, \citenamefont {H\"{o}fling}, \citenamefont {Klein},\ and\ \citenamefont {Delle~Site}}]{roya1}%
  \BibitemOpen
  \bibfield  {author} {\bibinfo {author} {\bibfnamefont {R.}~\bibnamefont {Ebrahimi~Viand}}, \bibinfo {author} {\bibfnamefont {F.}~\bibnamefont {H\"{o}fling}}, \bibinfo {author} {\bibfnamefont {R.}~\bibnamefont {Klein}},\ and\ \bibinfo {author} {\bibfnamefont {L.}~\bibnamefont {Delle~Site}},\ }\bibfield  {title} {\bibinfo {title} {Theory and simulation of open systems out of equilibrium},\ }\href {https://doi.org/10.1063/5.0014065} {\bibfield  {journal} {\bibinfo  {journal} {J. Chem. Phys.}\ }\textbf {\bibinfo {volume} {153}},\ \bibinfo {pages} {101102} (\bibinfo {year} {2020})}\BibitemShut {NoStop}%
\bibitem [{\citenamefont {Klein}\ \emph {et~al.}(2021)\citenamefont {Klein}, \citenamefont {Ebrahimi~Viand}, \citenamefont {H\"{o}fling},\ and\ \citenamefont {Delle~Site}}]{roya2}%
  \BibitemOpen
  \bibfield  {author} {\bibinfo {author} {\bibfnamefont {R.}~\bibnamefont {Klein}}, \bibinfo {author} {\bibfnamefont {R.}~\bibnamefont {Ebrahimi~Viand}}, \bibinfo {author} {\bibfnamefont {F.}~\bibnamefont {H\"{o}fling}},\ and\ \bibinfo {author} {\bibfnamefont {L.}~\bibnamefont {Delle~Site}},\ }\bibfield  {title} {\bibinfo {title} {Nonequilibrium induced by reservoirs: Physico-mathematical models and numerical tests},\ }\href {https://doi.org/10.1002/adts.202100071} {\bibfield  {journal} {\bibinfo  {journal} {Adv. Theory Simul.}\ }\textbf {\bibinfo {volume} {4}},\ \bibinfo {pages} {210071} (\bibinfo {year} {2021})}\BibitemShut {NoStop}%
\bibitem [{\citenamefont {Delle~Site}\ and\ \citenamefont {Hartmann}(2024)}]{carsten}%
  \BibitemOpen
  \bibfield  {author} {\bibinfo {author} {\bibfnamefont {L.}~\bibnamefont {Delle~Site}}\ and\ \bibinfo {author} {\bibfnamefont {C.}~\bibnamefont {Hartmann}},\ }\bibfield  {title} {\bibinfo {title} {Computationally feasible bounds for the free energy of nonequilibrium steady states, applied to simple models of heat conduction},\ }\bibfield  {journal} {\bibinfo  {journal} {Mol. Phys.}\ }\href {https://doi.org/10.1080/00268976.2024.2391484} {10.1080/00268976.2024.2391484} (\bibinfo {year} {2024})\BibitemShut {NoStop}%
\bibitem [{\citenamefont {Abramowitz}\ and\ \citenamefont {Stegun}(1964)}]{Abramowitz1964}%
  \BibitemOpen
  \bibfield  {author} {\bibinfo {author} {\bibfnamefont {M.}~\bibnamefont {Abramowitz}}\ and\ \bibinfo {author} {\bibfnamefont {I.~A.}\ \bibnamefont {Stegun}},\ }\href@noop {} {\emph {\bibinfo {title} {Handbook of Mathematical Functions}}}\ (\bibinfo  {publisher} {Dover Publications, New York},\ \bibinfo {year} {1964})\BibitemShut {NoStop}%
\bibitem [{\citenamefont {Qi}\ \emph {et~al.}(2010)\citenamefont {Qi}, \citenamefont {Guo},\ and\ \citenamefont {Guo}}]{Qi2010}%
  \BibitemOpen
  \bibfield  {author} {\bibinfo {author} {\bibfnamefont {F.}~\bibnamefont {Qi}}, \bibinfo {author} {\bibfnamefont {S.}~\bibnamefont {Guo}},\ and\ \bibinfo {author} {\bibfnamefont {B.-N.}\ \bibnamefont {Guo}},\ }\bibfield  {title} {\bibinfo {title} {Complete monotonicity of some functions involving polygamma functions},\ }\href {https://doi.org/10.1016/j.cam.2009.09.044} {\bibfield  {journal} {\bibinfo  {journal} {J. Comp. Appl. Math.}\ }\textbf {\bibinfo {volume} {233}},\ \bibinfo {pages} {2149} (\bibinfo {year} {2010})}\BibitemShut {NoStop}%
\end{thebibliography}%

\end{document}